\documentclass[lettersize]{IEEEtran}
\IEEEoverridecommandlockouts
\usepackage{lineno}
\usepackage{hyperref}
\usepackage{cite}
\usepackage{amsmath,amssymb,amsfonts}
\usepackage{amsmath}
\usepackage{amsthm}

\usepackage{graphicx}
\usepackage{textcomp}
\usepackage{xcolor}
\usepackage{graphicx}
\usepackage{float}
\usepackage{amsmath,mleftright}
\usepackage{amsfonts,amssymb}
\usepackage{mathrsfs}
\usepackage{mathtools}
\usepackage{algorithm}
\usepackage{algorithmicx}
\usepackage{algpseudocode}
\usepackage{bm}
\usepackage{multirow}
\usepackage{array}
\usepackage{amssymb}
\usepackage{amsmath}
\usepackage{cite}
\usepackage{url}
\usepackage{xcolor}
\usepackage{cite,graphicx,amsmath,amssymb}
\usepackage[caption=false,font=footnotesize]{subfig}
\usepackage{fancyhdr}
\usepackage{mdwmath}
\usepackage{mdwtab}
\usepackage{caption}
\usepackage{amsthm}
\usepackage{setspace}
\usepackage{bm}
\usepackage{algorithm}
\usepackage{algpseudocode}
\usepackage{mathtools}
\usepackage{dsfont}
\usepackage{bbm}
\newtheorem{remark}{Remark}
\newtheorem{theorem}{Theorem}

\newtheorem{lemma}{Lemma}

\newtheorem{corollary}{Corollary}

\makeatletter
\newcommand{\biggg}{\bBigg@{3}}
\newcommand{\Biggg}{\bBigg@{3.5}}
\makeatother
\def\BibTeX{{\rm B\kern-.05em{\sc i\kern-.025em b}\kern-.08em
    T\kern-.1667em\lower.7ex\hbox{E}\kern-.125emX}}

\begin{document}
\title{Robust Beamforming and Power Allocation for Coherent Cell-Free Massive MIMO with Residual Calibration Errors}
\author{
\author{
Mingjun Sun, Xidong Mu, Shaochuan Wu, Chongjun Ouyang, and Hyundong Shin,~\IEEEmembership{Fellow,~IEEE}
\thanks{Mingjun Sun and Shaochuan Wu are with the School of Electronics and Information Engineering, Harbin Institute of Technology, Harbin 150001, China (e-mail: sunmj@stu.hit.edu.cn; scwu@hit.edu.cn).}
\thanks{Xidong Mu is with the Centre for Wireless Innovation (CWI), School of Electronics, Electrical Engineering and Computer Science, Queen's University Belfast, Belfast, BT3 9DT, U.K. (e-mail: x.mu@qub.ac.uk).}
\thanks{Chongjun Ouyang is with the School of Electronic Engineering and Computer Science, Queen Mary University of London, London, E1 4NS, U.K. (e-mail: c.ouyang@qmul.ac.uk).}
\thanks{Hyundong Shin is with the Department of Electronics and Information Convergence Engineering, Kyung Hee University, Yongin-si, Gyeonggi-do 17104, Republic of Korea (e-mail: hshin@khu.ac.kr).}
}}
\maketitle

\begin{abstract}
This paper investigates robust downlink transmission to tolerate calibration aging in time-division duplex cell-free massive multiple-input multiple-output (CF-mMIMO) systems with residual calibration errors (RCEs).
Unlike existing studies that typically treat RCEs as static impairments, we develop a time-evolving RCE model that characterizes the joint effects of residual phase mismatches, residual carrier frequency offsets, and oscillator phase noise. 
Based on this model, two practical processing architectures are considered:
instantaneous calibrated-channel-based robust beamforming (BF) and statistical
beamformed-channel-based robust power allocation (PA). For both architectures,
tractable achievable rate lower bounds are derived, which explicitly reveal the
impact of calibration aging on coherent combining, BF-gain uncertainty, and inter-user interference.
Using these lower bounds as design metrics, we formulate an effective weighted sum-rate (EWSR) maximization problem over the data transmission interval, so that the resulting BF and PA designs can account for the temporal evolution of RCEs rather than a single calibrated instant. To efficiently solve the resulting problems, a Gauss--Legendre quadrature-based weighted minimum mean square error (WMMSE) optimization framework is developed, where both robust BF and PA are updated in an access point (AP)-block manner with closed-form solutions under per-AP power constraints. Simulation results demonstrate that: i) the proposed algorithms exhibit stable convergence; ii) the proposed robust BF and PA schemes achieve higher EWSR by explicitly accounting for calibration aging than their non-robust counterparts; and iii) robust BF achieves higher spectral efficiency (SE), whereas robust PA provides a more favorable tradeoff between SE and implementation cost in terms of computational complexity and fronthaul overhead.
\end{abstract}

\begin{IEEEkeywords}
Cell-free massive MIMO, coherent transmission, residual calibration error, robust beamforming and power allocation.
\end{IEEEkeywords}

\vspace{-7mm}
\section{Introduction}
Cell-free massive multiple-input multiple-output (CF-mMIMO) leverages a distributed large-aperture architecture to provide high directivity gains while eliminating cell-edge effects, making it a key enabling technology for 6G \cite{111,Hien,MMichail,Emil_making_CF}. To fully unlock the potential of CF systems, coherent downlink beamforming (BF) is indispensable, since the signals transmitted from distributed access points (APs) should be coherently combined at the user side. This, in turn, requires accurate downlink channel state information (CSI) and phase synchronization among APs. However, achieving these two requirements in practical CF systems is highly challenging.

To obtain downlink CSI, time-division duplexing (TDD) is widely adopted to reduce downlink pilot and feedback overhead, where downlink BF is designed from uplink channel estimates by exploiting channel reciprocity \cite{Hien,Ngo_nopilot}. Although the physical propagation channel is reciprocal, the effective uplink and downlink channels are affected by different transmit and receive radio frequency (RF) chains. Hardware components such as amplifiers, filters, and A/D converters introduce unknown amplitude scalings and phase rotations, resulting in channel non-reciprocity \cite{Correctly,intro1}. Consequently, the BF vectors designed from uplink channel estimates may be mismatched with the actual downlink channels. The phase synchronization requirement is even more challenging in CF systems. Specifically, geographically distributed APs are generally driven by independent local oscillators (LOs), and the resulting carrier frequency offsets (CFOs) induce time-varying relative phase rotations across APs. These phase rotations can rapidly impair the coherent superposition of jointly transmitted signals, even when the BF vectors are designed based on accurate downlink CSI \cite{Erik}. Therefore, both reciprocity calibration and frequency synchronization are essential for practical coherent CF-mMIMO transmission \cite{rogalin2014scalable}.

Several recent studies have examined the impact of channel non-reciprocity on CF-mMIMO systems \cite{CF_Non_1,Ohashi2021CFmMIMO_NOMA,URLLC}. In \cite{CF_Non_1}, the downlink achievable rate under conjugate BF was analyzed, showing that AP-side non-reciprocity phase mismatch is a dominant source of performance degradation. This issue was further investigated in non-orthogonal multiple access (NOMA) and ultra-reliable low-latency communication (URLLC) scenarios \cite{Ohashi2021CFmMIMO_NOMA,URLLC}. These studies demonstrate that ignoring imperfect reciprocity may lead to overly optimistic performance evaluation. To mitigate the reciprocity mismatch, various calibration techniques have been developed. Early approaches relied on additional hardware circuits to compensate for the non-reciprocal responses of transceiver chains \cite{hardware_1,hardware_2}. To reduce hardware cost and implementation complexity, over-the-air (OTA) calibration methods based on bidirectional channel measurements were proposed \cite{guillaud2005reciprocity}. In \cite{shepard2012argos}, a relative calibration algorithm was further developed without requiring user-side participation, thereby avoiding additional signaling protocols. For distributed antenna systems with low signal-to-noise ratio (SNR) calibration links, the authors of \cite{rogalin2014scalable} proposed a least-squares (LS)-based method with favorable performance. To address the high computational complexity of LS calibration and its vulnerability to noisy links, a graph neural network (GNN)-based calibration algorithm was proposed in \cite{GNN_RC}. Moreover, several pilot-efficient OTA calibration schemes were investigated in \cite{avalanche,LS3,xu2023spanning} to reduce training overhead and shorten the time gap between calibration and data transmission.

In addition to reciprocity calibration, several works have explicitly addressed the frequency synchronization problem in distributed antenna systems. Existing approaches commonly follow a master--slave architecture, where slave APs estimate and compensate their relative CFOs with respect to a master AP. For example, AirSync was proposed in \cite{AirSync}, where the master AP broadcasts a reference pilot outside the data transmission band, while each slave AP uses a dedicated antenna to track the phase drift and compensate it during data transmission. To improve synchronization generality, AirShare was developed in \cite{AirShare}, where a clock emitter is deployed within the service area and each distributed AP is equipped with a simple circuit to capture the common clock reference. In \cite{rogalin2014scalable}, synchronization pilots were embedded into the TDD frame structure. To support large-scale geographically distributed deployments, a subset of APs was selected as anchors, and graph coloring was employed to assign orthogonal pilots among them, while the remaining APs achieved synchronization by receiving pilots from nearby anchor APs. 
More recently, the authors of \cite{BeamSync} proposed BeamSync, a two-stage protocol for frequency synchronization, where synchronization pilots are beamformed along the dominant inter-AP channel direction and an non-linear LS-based CFO estimator is derived. Nevertheless, the resulting CFO estimation still involves a one-dimensional search over the candidate frequency offsets. To obtain a closed-form estimator, Sync4CT was proposed in \cite{Sync4CT}, which designs a customized orthogonal frequency division multiplexing (OFDM) synchronization signal composed of two identical halves. By exploiting this repeated structure, Sync4CT enables each slave AP to estimate the CFO in closed form and allows multiple slave APs to perform CFO estimation in parallel, thereby reducing both synchronization latency and pilot overhead.

Despite the above progress, practical calibration and synchronization are inevitably imperfect. Receiver noise, weak master--slave links, and finite pilot resources leave residual calibration errors (RCEs) after each calibration update. More importantly, these RCEs do not remain static during the subsequent data transmission interval. In practical 5G-NR-type air interfaces, downlink data transmission usually starts after a non-negligible delay caused by sounding reference signal (SRS)-based channel estimation, scheduling, and link adaptation for modulation and coding scheme (MCS) selection\cite{Access1}. Moreover, in CF networks, fronthaul data exchange and centralized processing introduce additional latency \cite{Access1,CF_fronthaul}. During this delay and the following data transmission phase, residual CFOs accumulate into time-varying phase drifts \cite{AirSync}, while oscillator phase noise (PN) introduces additional random phase fluctuations \cite{phase_noise1,phase_noise2}. 
As a result, the inter-AP coherent combining gain gradually degrades with the calibration age, and downlink BF and power allocation (PA) designs based solely on the instantaneous calibrated channel at the calibration instant become increasingly mismatched with the actual downlink CSI.
This makes robust design that explicitly accounts for RCE evolution highly desirable. Robust BF under non-ideal calibration has been investigated in \cite{RobustCF1,RobustCF2}. However, existing works mainly focus on static channel non-reciprocity, while overlooking the joint impact of residual CFOs and oscillator PN after calibration. The time evolution of RCEs between the calibration instant and the actual data transmission interval also remains insufficiently characterized. Consequently, robust downlink design for coherent CF-mMIMO under time-evolving RCEs remains an open problem. This motivates us to develop a tractable RCE evolution model and design robust BF and PA schemes that explicitly account for calibration aging.
The main contributions of this paper are summarized as follows:

\begin{itemize}
    \item We develop a time-evolving RCE model for TDD CF-mMIMO systems based on the Sync4CT calibration framework. The model jointly captures residual phase mismatches (PMs), residual CFOs, and oscillator PN. Based on this model, we further introduce a coherence factor to characterize the degradation of inter-AP coherent combining over the data transmission interval.
    
    \item We first investigate an instantaneous calibrated-channel-based robust BF architecture. A tractable achievable rate lower bound is derived, revealing the impact of RCE evolution on the desired coherent gain, BF-gain uncertainty, and inter-user interference. Based on this bound, we formulate an effective weighted sum-rate (EWSR) maximization problem and develop a Gauss--Legendre (GL) quadrature-based weighted minimum mean square error (WMMSE) framework, where the BF vectors are efficiently updated in closed form.
    \item We further propose a statistical beamformed-channel-based robust PA architecture to reduce fronthaul overhead and processing latency. In this architecture, each AP locally computes its BF direction, while the central processing unit (CPU) only optimizes the power coefficients. A corresponding achievable rate lower bound and an AP-block WMMSE algorithm are developed, which significantly reduces the CPU-side computational complexity.

    \item Simulation results verify the convergence and effectiveness of the proposed robust algorithms. Both robust schemes substantially outperform non-robust counterparts that ignore RCE evolution, demonstrating the importance of explicitly accounting for calibration aging. Moreover, robust BF achieves higher spectral efficiency, whereas robust PA offers a more latency- and fronthaul-efficient alternative and may even outperform robust BF when practical fronthaul overhead and CPU processing delay are considered.
\end{itemize}


\vspace{-5mm}
\section{System Model and RCE Evolution Model}
\vspace{-2mm}
In this section, we begin by characterizing the main sources of inter-AP asynchronism in distributed coherent transmission, including the CFOs arising from independent LOs at different APs and the PMs caused by imperfect RF chains. Based on the Sync4CT framework \cite{Sync4CT}, we then formulate the RCEs after each update and model their temporal evolution over the subsequent data transmission interval. 

\vspace{-5mm}
\subsection{System Description and Sync4CT-Based RCE Model}

We consider a CF-mMIMO downlink system, where $M$ geographically distributed APs, each equipped with $N_t$ antennas, jointly serve $K$ single-antenna users over the same time-frequency resources. 
The sets of APs and users are denoted by $\mathcal{M}=\{1,\ldots,M\}$ and $\mathcal{K}=\{1,\ldots,K\}$, respectively. Following the user-centric operation principle, user $k$ is served by a subset of nearby APs $\mathcal{M}_k\subseteq\mathcal{M}$, while $\mathcal{K}_m\subseteq\mathcal{K}$ denotes the set of users served by AP $m$. The system operates in TDD mode, where downlink BF is designed based on the channel estimated from uplink pilot training.

Cooperative transmission in such a system relies on the coherent superposition of signals transmitted from different APs. In practice, however, this coherence is impaired by imperfect hardware-induced inter-AP CFOs and PMs.
In this work, we assume that all antennas within the same AP are driven by a common LO, which is shared by the uplink and downlink RF chains \cite{Sync4CT}. Consequently, all antenna branches within the same AP share a common phase reference. Furthermore, the amplitude scaling and phase shifts across different RF chains within each AP are assumed to be locally calibrated via bidirectional pilot signaling \cite{shepard2012argos}, such that their residual impact is sufficiently small. Therefore, for coherent joint transmission, the dominant impairment arises from \emph{inter-AP} mismatch rather than \emph{intra-AP} mismatch. This is a practical assumption because the calibration between APs, constrained by low SNR links, is generally far less effective than the intra-AP calibration \cite{rogalin2014scalable,vieira2017reciprocity}.

The OTA propagation channel is reciprocal and is assumed to follow a block-fading model, i.e., it remains constant within each coherence block. Specifically, the channel between AP $m$ and user $k$ is denoted by $\mathbf{h}_{m,k}\in\mathbb{C}^{N_t\times 1}$ and modeled as
\begin{equation}
    \mathbf{h}_{m,k}\sim \mathcal{CN}\!\left(\mathbf{0},\mathbf{R}_{m,k}\right),
    \label{eq:h_mk}
\end{equation}
where $\mathbf{R}_{m,k}\in \mathbb{C}^{N_t\times N_t}$ is the spatial correlation matrix. The normalized trace $\beta_{m,k}\triangleq \frac{1}{N_t}\operatorname{tr}(\mathbf{R}_{m,k})$ accounts for the large-scale fading coefficient.

\textit{(1) Argos-based intra-AP calibration for PMs:}
Considering the non-ideal RF front-end, the uplink and downlink channels can be expressed as
\begin{subequations}
\begin{align}
\mathbf{h}_{m,k}^{\mathrm{UL}} &= \mathbf{D}_{\mathbf{r}_m}\mathbf{h}_{m,k} t_k, \\
\mathbf{h}_{m,k}^{\mathrm{DL}} &= \mathbf{D}_{\mathbf{t}_m}\mathbf{h}_{m,k} r_k,
\end{align}
\end{subequations}
where $\mathbf{r}_m = [r_{m,1}, \ldots, r_{m,N_t}]^{\mathsf T}\in \mathbb{C}^{N_t \times 1}$ and $\mathbf{t}_m = [t_{m,1}, \ldots, t_{m,N_t}]^{\mathsf T}\in \mathbb{C}^{N_t \times 1}$ denote the receive and transmit RF gain vectors at AP $m$, respectively, and $\mathbf{D}_{\mathbf{r}_m}$ and $\mathbf{D}_{\mathbf{t}_m}$ are the corresponding diagonal matrices constructed from them. Meanwhile, $t_k$ and $r_k$ denote the transmit and receive RF gains at user $k$, respectively.

According to \cite{shepard2012argos}, if each AP performs internal calibration relative to its first antenna, the calibration matrix can be written as
\vspace{-5pt}
\begin{equation}
    \mathbf{C}_{m} = \frac{r_{m,1}}{t_{m,1}}\,\mathbf{D}_{\mathbf{t}_m}\mathbf{D}_{\mathbf{r}_m}^{-1}.
    \label{C_AP}
\end{equation}
By applying $\mathbf{C}_{m}$ to the uplink channel, the actual downlink channel and the calibrated uplink channel satisfy the following relation:
\begin{equation}
\mathbf{h}_{m,k}^{\mathrm{DL}}
=
\underbrace{\mathbf{C}_{m}\mathbf{h}_{m,k}^{\mathrm{UL}}}_{\substack{\text{The calibrated}~\text{uplink channel}}} \cdot
\frac{t_{m,1}r_k}{r_{m,1}t_k}.
\end{equation}
This shows that, for all antennas within the same AP, the actual downlink channel and the calibrated uplink channel differ only by a common complex scalar ${t_{m,1}r_k}/{(r_{m,1}t_k)}$, which does not affect the BF pattern. However, this scalar varies across different APs, and hence an additional inter-AP calibration stage is still required.

\textit{2) Sync4CT-based inter-AP calibration for both CFOs and PMs:}
For notational convenience, define $\mathbf{g}_{m,k} \triangleq \mathbf{C}_{m}\mathbf{h}_{m,k}^{\mathrm{UL}}$ and $\frac{t_{m,1}}{r_{m,1}} \triangleq \rho_m e^{j\nu_m}$, where $\rho_m>0$ and $\nu_m$ denote the corresponding amplitude and phase, respectively.
With both inter-AP CFOs and PMs exist, the received passband signal at user $k$ can be expressed as follows\footnote{Since $r_k/t_k$ is a user-side common scalar and can be locally compensated at the user, it is omitted in the sequel.}:
\begin{equation}\label{yk_1}
y_k = \sum_{i=1}^{K}\sum_{m\in \mathcal{M}_i}
e^{j(2\pi f^{\rm c}_{m}t-\nu_m)}\rho_m \mathbf{g}_{m,k}^{\mathsf{H}}\mathbf{w}_{m,i}s_i + n_k,
\end{equation}
where $s_k$ represents the unit-power data symbol for user $k$, satisfying $\mathbb{E}[|s_k|^2]=1$, $\mathbf{w}_{m,k} \in \mathbb{C}^{N_t \times 1}$ is the BF vector designed at AP $m$ for user $k$, and $n_k \sim \mathcal{CN}(0,\sigma_k^2)$ is the additive complex Gaussian noise at user $k$.
Here, $f^{\rm c}_{m}$ denotes the carrier frequency generated by the LO at AP $m$. Due to oscillator imperfections, the carrier frequencies across different APs are generally not identical. It follows from \eqref{yk_1} that the phase terms $\{2\pi f^{\rm c}_{m}t-\nu_m\}$ across different APs prevent perfectly coherent signal superposition at user side.

Following the master--slave calibration procedure in \cite{Sync4CT,BeamSync}, and without loss of generality, AP~1 is selected as the master AP. The carrier frequency difference and phase offset between AP $m$ and AP~1 are defined as $f_{\mathrm{d},m} = f_{m}^{\mathrm{c}} - f_{1}^{\mathrm{c}}$ and $\nu_{\mathrm{d},m} = \nu_{m} - \nu_{1}$.
As shown in \cite{Sync4CT}, dedicated OFDM signals are designed to estimate $f_{\mathrm{d},m}$ and $\nu_{\mathrm{d},m}$. However, due to receiver noise and the potentially weak master--slave links, the resulting estimates are generally imperfect. We therefore model the residual estimation errors as
\begin{subequations}
\begin{align}
    \tilde{f}_{\mathrm{d},m} &\sim \mathcal{N}(0,\sigma_{f,m}^2), \\
    \tilde{\nu}_{\mathrm{d},m} &\sim \mathcal{N}(0,\sigma_{\nu,m}^2).
\end{align}
\end{subequations}
Accordingly, after imperfect calibration, the received signal for user $k$ can be rewritten as
\begin{equation}\label{yk_cali}
y_k = \varphi_0 \sum_{i=1}^{K}\sum_{m\in \mathcal{M}_i}
e^{j(2\pi \tilde{f}_{\mathrm{d},m}t-\tilde{\nu}_{\mathrm{d},m})}
\rho_m \mathbf{g}_{m,k}^{\mathsf{H}}\mathbf{w}_{m,i}s_i + n_k,
\end{equation}
where $\varphi_0 = e^{j(2\pi f_1^{\rm c} t-\nu_1)}$ is a common phase factor.

\vspace{-3mm}
\subsection{RCE Evolution Modeling}

The above model characterizes the RCEs immediately after the latest Sync4CT update. In practice, however, downlink payload transmission usually starts after a non-negligible delay, during which the RCEs continue to evolve. Therefore, an explicit error evolution model is required. Let $T_{\mathrm{gap}}<T_{\mathrm c}$ denote the pre-transmission delay between the latest calibration update and the beginning of downlink data transmission, where $T_{\mathrm c}$ is the calibration interval. Let $T_{\mathrm s}$ denote the duration of one channel use. The number of discrete-time samples within one calibration interval is then given by $N_{\mathrm c}=\left\lfloor\frac{T_{\mathrm c}}{T_{\mathrm s}}\right\rfloor$, and the starting time instant of the downlink data transmission phase is defined as $n_0\triangleq \left\lceil \frac{T_{\mathrm{gap}}}{T_{\mathrm s}}\right\rceil$.

For each slave AP $m\in\mathcal M\setminus\{1\}$, the RCE at the $n$th time instant is modeled as
\begin{equation}
\phi_m[n]
=
-\tilde{\nu}_{\mathrm d,m}
+
2\pi nT_{\mathrm s}\tilde{f}_{\mathrm d,m}
+
\zeta_{\mathrm d,m}[n].
\label{eq:phi_m_evolution}
\end{equation}
whereas the RCE of the master AP is set to $\phi_1[n]=0$, $n=0,1,\ldots,N_{\mathrm c}$, since it serves as the common phase reference. The first term in \eqref{eq:phi_m_evolution} represents the residual static PM after calibration, mainly caused by RF front-end impairments such as manufacturing imperfections and temperature variations. Since such mismatches typically vary slowly, even over time scales of hours \cite{Erik,Access1}, this term can be regarded as approximately constant over the considered transmission interval. The second term describes the phase drift caused by the residual CFO, which accumulates linearly over time. The third term $\zeta_{\mathrm d,m}[n]$ captures the post-calibration phase fluctuation induced by oscillator PN. Although the PN value at the calibration instant is absorbed into the estimated PM, the oscillator phase continues to fluctuate after calibration, introducing a new time-varying residual phase component during data transmission. Since the master AP is used as the phase reference, $\zeta_{\mathrm d,m}[n]$ denotes the differential PN process between AP $m$ and the master AP, i.e., $\zeta_{\mathrm d,m}[n]=\zeta_m[n]-\zeta_1[n]$.
Following standard oscillator PN models, $\zeta_{\mathrm d,m}[n]$ is
modeled as a discrete-time Wiener process~\cite{phase_noise1,phase_noise2}:
\begin{subequations}
\begin{align}
\zeta_{\mathrm d,m}[0] &= 0,\\
\zeta_{\mathrm d,m}[n]
&=
\zeta_{\mathrm d,m}[n-1]
+
\delta_{\mathrm d,m}[n],
\quad n=1,2,\ldots,N_{\mathrm c},
\end{align}
\end{subequations}
where the increments $\delta_{\mathrm d,m}[n]$ are independent and identically
distributed real Gaussian random variables:
\begin{equation}
\delta_{\mathrm d,m}[n]
\sim
\mathcal N\!\left(0,\sigma_{\zeta,m}^2\right).
\end{equation}
Here,
\begin{equation}
\sigma_{\zeta,m}^2
=
4\pi^2
\left[
\left(f_m^{\mathrm c}\right)^2 c_m
+
\left(f_1^{\mathrm c}\right)^2 c_1
\right]
T_{\mathrm s},
\end{equation}
where $c_m$ is an oscillator-dependent constant associated with AP $m$.
Accordingly, after $N$ time instants, the accumulated phase drift is $\zeta_{d,m}[N]=\sum_{i=1}^{N}\delta_{d,m}[i]$.

Assuming that $\tilde{\nu}_{\mathrm d,m}$, $\tilde{f}_{\mathrm d,m}$, and
$\zeta_{\mathrm d,m}[n]$ are mutually independent, the variance of the total
RCE in \eqref{eq:phi_m_evolution} is given by
\begin{equation}
\label{eq:sigma_phi}
\sigma^2_{\phi,m}[n]
=
\sigma_{\nu,m}^2
+
(2\pi nT_{\mathrm s})^2\sigma_{f,m}^2
+
n\sigma_{\zeta,m}^2.
\end{equation}

\vspace{-3mm}
\section{Instantaneous Calibrated-Channel-Based Robust Beamforming Design}
\vspace{-1mm}
In this section, we investigate centralized robust BF based on the instantaneous calibrated channel. To characterize the effect of RCE evolution on the downlink transmission of CF-mMIMO systems, we first derive a tractable lower bound on the achievable rate and formulate an EWSR maximization problem. Then, by exploiting GL quadrature, we develop an efficient robust BF algorithm for solving the resulting problem.

\vspace{-5mm}
\subsection{Instantaneous Effective-Channel-Based Achievable Rate Lower Bound}
\vspace{-1mm}
In this subsection, we first derive an analytical expression for the achievable rate lower bound and then present several insightful observations on the effect of RCEs. We assume that each user has access to the instantaneous effective channel\footnote{The instantaneous effective channel can be obtained through a downlink BF training \cite{ICSI}.}. 
For notational convenience, define
$b_{m,k,i}\triangleq \mathbf{g}_{m,k}^\mathsf{H}\mathbf{w}_{m,i}$.
At the $n$th time instant after the latest calibration update, the received signal at user $k$ is given by\footnote{The common phase factor $\varphi_0$ is omitted since $|\varphi_0|=1$ and thus it does not affect the achievable-rate analysis. In addition, $\rho_m$ is real-valued, whose impact on coherent combining is much weaker than that of the phase mismatch~\cite{Erik,Sync4CT}. Therefore, $\rho_m$ is omitted in the following derivation for simplicity, while its inclusion is straightforward.}
\begin{equation}\label{y_ins_G_phi}
    y_k[n]
    =
    \sum_{i=1}^{K}\sum_{m\in\mathcal{M}_i}
    e^{j\phi_m[n]} b_{m,k,i} s_i[n]
    + n_k[n].
\end{equation}
Let
$\mathcal{G}\triangleq\{\mathbf g_{m,k},\forall m,k\}$ and
$\boldsymbol\phi\triangleq\{\phi_m[n],\forall m\}$. According to Shannon's theory, the capacity of user $k$ at the $n$th time instant is lower bounded by
\begin{align}\label{pilot_aided_rate}
    &C_k^{\mathrm{iec}}[n] \geq R_k^{\mathrm{iec}}[n]  \\ \nonumber
    &=\mathbb{E}_{\boldsymbol{\phi},\mathcal{G}}
    \!\left[
    \log_2\!
    \left(
    1\!+\!
    \frac{
    \left|\sum_{m\in\mathcal M_k}
    e^{j\phi_m[n]} b_{m,k,k}\right|^2
    }{
    \sum_{i\neq k}
    \left|\sum_{m\in\mathcal M_i}
    e^{j\phi_m[n]} b_{m,k,i}\right|^2
    +
    \sigma_k^2
    }
    \right)\!
    \right].
\end{align}
However, the expectation in $R_k^{\mathrm{iec}}[n]$ makes it
difficult to obtain a closed-form expression and to directly optimize the
BF vectors.
In this work, we consider that the CPU designs the BF vectors based on the instantaneous calibrated
CSI $\mathcal{G}$ and the statistical knowledge of the RCEs\cite{RobustCF1,RobustCF2}, denoted by $\mathcal S_{\phi}$, while the instantaneous RCE realizations $\boldsymbol\phi$ are unavailable at the CPU. Therefore, instead of directly optimizing \eqref{pilot_aided_rate}, we first establish the following information-theoretic relation, which motivates the derivation of a more tractable lower bound for BF design.
\vspace{-3mm}
\begin{lemma}\label{lemma1}
For any given BF design that depends only on
$\mathcal{G}$ and $\mathcal S_{\phi}$, the capacity of user $k$ at the $n$th
time instant with instantaneous effective channel satisfies
\begin{equation}
    C_k^{\mathrm{iec}}[n]
    \geq
    C_k^{\mathrm{sRCE}}[n],
    \label{eq:capacity_ordering_rce}
\end{equation}
where $C_k^{\mathrm{sRCE}}[n]$ denotes the capacity when user $k$ does not know the instantaneous RCE realization and only has access to the RCE-averaged effective channel determined by the statistical knowledge $\mathcal S_{\phi}$ of the RCEs.
\end{lemma}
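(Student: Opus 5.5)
The plan is to read the lemma as a comparison of two channels that have the same input and the same output, $y_k[n]$ in \eqref{y_ins_G_phi}, and differ only in the side information available at user $k$. In the first setting the receiver knows the instantaneous effective channel, that is, the realizations $\{\sum_{m\in\mathcal M_i} e^{j\phi_m[n]} b_{m,k,i}\}_i$, which are functions of $(\boldsymbol\phi,\mathcal G)$. In the second setting it knows only the RCE-averaged quantities $\{\mathbb E_{\boldsymbol\phi}[\sum_{m} e^{j\phi_m[n]} b_{m,k,i}\mid \mathcal G]\}_i$, which are functions of $\mathcal G$ and $\mathcal S_\phi$ alone. Call these two side-information variables $U_1$ and $U_2$.

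Fix a BF design that depends only on $(\mathcal G,\mathcal S_\phi)$. The key structural point is that the transmitter does not know $\boldsymbol\phi$, so the input distribution is independent of $\boldsymbol\phi$ given $\mathcal G$. Also, $U_2$ is a deterministic function of $\mathcal G$, because $\mathcal S_\phi$ is fixed and known, and the weights $b_{m,k,i}$ depend only on $\mathcal G$ and $\mathcal S_\phi$.

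The proof then proceeds in three steps. First, I would observe that $U_2$ can be computed from $(U_1,\mathcal G)$. More simply, one can strengthen the first receiver to know $(\boldsymbol\phi,\mathcal G)$; this is equivalent for decoding, since the output depends on $(\boldsymbol\phi,\mathcal G)$ only through the effective channel. Either way the second receiver's information is a degraded version of the first receiver's information.

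Second, for any input distribution $p(s)$ independent of the channel state, I would use the chain rule together with $s\perp(\boldsymbol\phi,\mathcal G)$:
\begin{equation*}
I(s;y_k,U_1)=I(s;U_1)+I(s;y_k\mid U_1)=I(s;y_k\mid U_1)\ \ge\ I(s;y_k\mid U_2).
\end{equation*}
The inequality follows because $I(s;y_k,U_1,U_2)\ge I(s;y_k,U_2)$ and $I(s;U_2)=0$. A small technical point needs care here: $U_2$ should be recoverable from $U_1$. If it is not, I would take the first receiver's knowledge to be the pair (effective channel, $\mathcal G$), and justify this through the BF-training footnote, or else argue that appending $\mathcal G$ does not change the capacity because $y_k$ is conditionally independent of $\mathcal G$ given the effective channel.

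Third, I would take the supremum over admissible input distributions on both sides, which gives $C_k^{\mathrm{iec}}[n]\ge C_k^{\mathrm{sRCE}}[n]$. This is the data-processing and side-information monotonicity argument.

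The main obstacle I expect is making the Markov and degradedness structure precise, namely verifying that the averaged channel is measurable with respect to the first receiver's information, rather than the inequality itself. I would handle it by writing the state as $(\boldsymbol\phi,\mathcal G)$ and noting that the first receiver's knowledge makes $y_k$'s law given $s$ identical to that under full state knowledge.
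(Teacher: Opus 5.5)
Your proposal is correct and is essentially the paper's argument: the paper takes $C_k^{\mathrm{iec}}[n]=\sup_{p(S)} I(S;Y\mid\boldsymbol\phi,\mathcal G,\mathcal S_\phi)$ and $C_k^{\mathrm{sRCE}}[n]=\sup_{p(S)} I(S;Y\mid\mathcal G,\mathcal S_\phi)$. It then expands $I(S;Y,\boldsymbol\phi\mid\mathcal G,\mathcal S_\phi)$ by the chain rule in both orders, using $I(S;\boldsymbol\phi\mid\mathcal G,\mathcal S_\phi)=0$ and $I(S;\boldsymbol\phi\mid Y,\mathcal G,\mathcal S_\phi)\ge 0$. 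This is exactly your fallback of giving the first receiver $(\boldsymbol\phi,\mathcal G)$, which makes the side information nested automatically, so your concern about whether $U_2$ is recoverable from $U_1$ never arises there.
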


\begin{proof}
Please see Appendix A.
\end{proof}
\vspace{-6mm}

According to \textbf{Lemma~\ref{lemma1}}, it is sufficient
to derive a tractable lower bound of $C_k^{\mathrm{sRCE}}[n]$ for BF
design.
When only the RCE-averaged effective channel are known at user $k$, the
received signal in \eqref{y_ins_G_phi} can be decomposed as follows:
\begin{equation}
    y_k[n]
    =
    \mathsf{DS}_k[n] s_k[n]
    +
    \mathsf{BU}_k[n] s_k[n]
    +
    \sum_{i\neq k}^K\mathsf{UI}_{k,i}[n] s_i[n]
    +
    n_k[n],
\end{equation}
where
\begin{align}
    \mathsf{DS}_k[n]
    &\triangleq
    \mathbb{E}_{\boldsymbol{\phi}}\!\left[
    \sum_{m\in\mathcal{M}_k}
    e^{j\phi_m[n]} b_{m,k,k}
    \right],
\end{align}
\begin{align}
    \mathsf{BU}_k[n]
    &\triangleq
    \sum_{m\in\mathcal{M}_k}
    e^{j\phi_m[n]} b_{m,k,k}
    -
    \mathsf{DS}_k[n], 
\end{align}
\begin{align}
    \mathsf{UI}_{k,i}[n]
    &\triangleq
    \sum_{m\in\mathcal{M}_i}
    e^{j\phi_m[n]} b_{m,k,i},
    \quad i\neq k \label{UI}.
\end{align}
By construction, $\mathsf{BU}_k[n]$ has zero mean and is therefore uncorrelated with $\mathsf{DS}_k[n]$. Moreover, since the data symbols are mutually independent and independent of the RCEs, the multiuser interference and additive noise are uncorrelated with the desired signal. Therefore, the effective noise is uncorrelated with the desired signal.
By invoking the worst-case uncorrelated Gaussian noise argument~\cite{CF_book,Hien}, $C_k^{\mathrm{sRCE}}[n]$ can be lower bounded as follows:
\begin{align}
\label{Rk_sRCE}
    C_k^{\mathrm{sRCE}}[n] \geq R_{k}^{\mathrm{sRCE}}[n] = 
    \mathbb{E}_{\mathcal{G}}
    \left[
    \log_2
    \left(
    1+
    \Gamma_k[n]
    \right)
    \right],
\end{align}
where $\Gamma_k[n]$ denotes the effective signal-to-interference-plus-noise ratio (SINR) conditioned on $\mathcal{G}$, with its dependence on $\mathcal{G}$ omitted for notational simplicity. Specifically,
\begin{equation}
    \Gamma_k[n]
    \triangleq
    \frac{|\mathsf{DS}_k[n]|^2}
    {
    \mathbb{E}_{\boldsymbol{\phi}}\!\left[|\mathsf{BU}_k[n]|^2\right]
    +
    \sum_{i\neq k}^{K}
    \mathbb{E}_{\boldsymbol{\phi}}\!\left[|\mathsf{UI}_{k,i}[n]|^2\right]
    +
    \sigma_k^2
    }.
    \label{eq:conditional_sinr_srce}
\end{equation}
The outer expectation over $\mathcal{G}$ in \eqref{Rk_sRCE} is used for
ergodic performance evaluation, whereas the BF design only requires
the closed-form expression of the conditional effective SINR $\Gamma_k[n]$. 

To enable a compact reformulation suitable for BF design, we introduce
the following auxiliary definitions:

\emph{1) Coherence factor:}
The coherence factor of AP $m$ at time instant $n$ is defined as
\begin{equation}
    \alpha_m[n]
    \triangleq
    \exp\!\left(
    -\frac{\sigma_{\phi,m}^2[n]}{2}
    \right),
    \quad \forall m\in\mathcal M .
    \label{eq:alpha_def}
\end{equation}

\emph{2) Stacked BF vector:}
The stacked BF vector for user $k$ is defined as
\begin{equation}
    \mathbf w_k
    \triangleq
    \begin{bmatrix}
    \mathbf w_{1,k}^{\mathsf T},
    \mathbf w_{2,k}^{\mathsf T},
    \ldots,
    \mathbf w_{M,k}^{\mathsf T}
    \end{bmatrix}^{\mathsf T}
    \in \mathbb C^{MN_t\times 1},
    \label{eq:stacked_w_def}
\end{equation}
where the blocks corresponding to APs not serving user $k$ are set to zero,
i.e., $\mathbf w_{m,k}=\mathbf 0$, $\forall m\notin\mathcal M_k$.

\emph{3) Effective channel vector:}
For user $k$ at time instant $n$, the effective channel vector is
defined as
\begin{equation}
    \mathbf g_k[n]
    \triangleq
    \begin{bmatrix}
    \alpha_1[n]\mathbf g_{1,k}^{\mathsf T},
    \alpha_2[n]\mathbf g_{2,k}^{\mathsf T},
    \ldots,
    \alpha_M[n]\mathbf g_{M,k}^{\mathsf T}
    \end{bmatrix}^{\mathsf T}
    \in \mathbb C^{MN_t\times 1}.
    \label{eq:effective_channel_def}
\end{equation}

\emph{4) Residual distortion matrix:}
The block-diagonal residual distortion matrix for user $k$ is defined as
\begin{align}
    \mathbf C_k[n]
    \triangleq
    \mathrm{blkdiag} \Big(
    &(1-\alpha_1^2[n])\mathbf g_{1,k}\mathbf g_{1,k}^{\mathsf H},
    \ldots,
    \nonumber\\
    &(1-\alpha_M^2[n])\mathbf g_{M,k}\mathbf g_{M,k}^{\mathsf H}
    \Big).
    \label{eq:Ck_def}
\end{align}

With the above notation, the following theorem gives the closed-form
quadratic expression of $\Gamma_k[n]$.
\vspace{-1mm}
\begin{theorem}
\label{theorem1}
In the presence of RCEs, for any given calibrated CSI realization
$\mathcal G$, the effective SINR $\Gamma_k[n]$ in
\eqref{eq:conditional_sinr_srce} can be equivalently expressed as
\begin{equation}
\label{Gamma_compact}
    \Gamma_k[n]
    =
    \frac{
    \left|\mathbf g_k^{\mathsf H}[n]\mathbf w_k\right|^2
    }
    {
    \sum_{i\neq k}^{K}
    \left|\mathbf g_k^{\mathsf H}[n]\mathbf w_i\right|^2
    +
    \sum_{i=1}^{K}
    \mathbf w_i^{\mathsf H}\mathbf C_k[n]\mathbf w_i
    +
    \sigma_k^2
    }.
\end{equation}
\end{theorem}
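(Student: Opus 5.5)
The argument is a direct moment computation, using only the first and second moments of $e^{j\phi_m[n]}$.

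\emph{Moments of the RCE phasors.} By \eqref{eq:phi_m_evolution}, each $\phi_m[n]$ is a sum of independent zero-mean Gaussians: the residual PM, the residual CFO scaled by $2\pi nT_{\mathrm s}$, and the Wiener increments. Hence $\phi_m[n]\sim\mathcal N(0,\sigma^2_{\phi,m}[n])$ with variance \eqref{eq:sigma_phi}. The Gaussian characteristic function then gives
\begin{equation*}
\mathbb E[e^{j\phi_m[n]}]=\exp\!\left(-\tfrac{\sigma^2_{\phi,m}[n]}{2}\right)=\alpha_m[n].
\end{equation*}
For the master AP, $\phi_1\equiv 0$, so $\alpha_1=1$, which is consistent with the definition. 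I will assume, as implicit in the model, that the RCEs of different slave APs are mutually independent, since their residual estimation errors and differential PN increments are independent. Under this assumption, for $m\neq m'$,
\begin{equation*}
\mathbb E\big[e^{j(\phi_m-\phi_{m'})}\big]=\alpha_m\alpha_{m'},
\end{equation*}
while for $m=m'$ this expectation equals $1$. I expect this independence assumption to be the only delicate point. The differential PN processes $\zeta_m-\zeta_1$ share the master component $\zeta_1$, so strictly they are correlated, and I will need to either state the assumption explicitly or argue that this correlation is neglected. Everything else is routine.

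\emph{Desired signal.} Since the BF vectors depend only on $\mathcal G$ and $\mathcal S_\phi$, the coefficients $b_{m,k,i}$ are deterministic given $\mathcal G$. Then
\begin{equation*}
\mathsf{DS}_k[n]=\sum_{m\in\mathcal M_k}\alpha_m[n]\,b_{m,k,k}=\mathbf g_k^{\mathsf H}[n]\mathbf w_k,
\end{equation*}
using the zero blocks of $\mathbf w_k$ in \eqref{eq:stacked_w_def} and the fact that $\alpha_m$ is real. This yields the numerator of \eqref{Gamma_compact}.

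\emph{Second moments.} For any $i$, I will compute $\mathbb E_{\boldsymbol\phi}\big|\sum_m e^{j\phi_m}b_{m,k,i}\big|^2$ by expanding the double sum and applying the moments above:
\begin{equation*}
\sum_{m\neq m'}\alpha_m\alpha_{m'}b_{m,k,i}b^*_{m',k,i}+\sum_m|b_{m,k,i}|^2
=\Big|\sum_m\alpha_m b_{m,k,i}\Big|^2+\sum_m(1-\alpha_m^2)|b_{m,k,i}|^2 .
\end{equation*}
The first term is $|\mathbf g_k^{\mathsf H}[n]\mathbf w_i|^2$. The second term is $\mathbf w_i^{\mathsf H}\mathbf C_k[n]\mathbf w_i$ by the block-diagonal structure of \eqref{eq:Ck_def}.

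\emph{Assembling the denominator.} For $i\neq k$, the identity above gives $\mathbb E|\mathsf{UI}_{k,i}|^2$ directly. For $i=k$, I use
\begin{equation*}
\mathbb E|\mathsf{BU}_k|^2=\mathbb E\Big|\sum_m e^{j\phi_m}b_{m,k,k}\Big|^2-|\mathsf{DS}_k|^2,
\end{equation*}
which leaves only $\mathbf w_k^{\mathsf H}\mathbf C_k[n]\mathbf w_k$. Summing these contributions and adding $\sigma_k^2$ produces the denominator of \eqref{Gamma_compact}, completing the proof.
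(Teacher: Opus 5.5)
Your proposal is correct and follows essentially the same route as the paper's Appendix~B. The paper likewise uses the Gaussian characteristic function for $\mathsf{DS}_k[n]$ and expands the second moment under inter-AP independence of the RCEs, which it imposes ``for analytical tractability'' without addressing the shared master-AP phase-noise correlation you flagged. It then obtains $\mathbb E_{\boldsymbol\phi}[|\mathsf{BU}_k[n]|^2]$ as the second moment minus $|\mathsf{DS}_k[n]|^2$ and rewrites everything via the stacked $\mathbf g_k[n]$ and block-diagonal $\mathbf C_k[n]$.
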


\begin{proof}
Please see Appendix B.
\end{proof}

\begin{figure}[!t]
 \centering
\setlength{\abovecaptionskip}{0pt}
\includegraphics[height=0.18\textwidth]{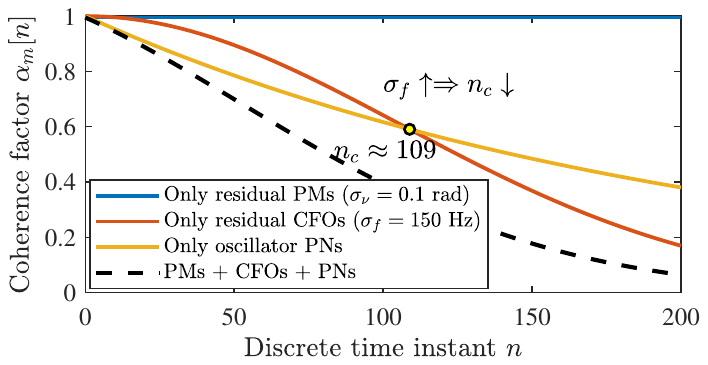}
\caption{Coherence factors versus discrete time instant.}
\label{Coherence_factors}
\vspace{-2mm}
\end{figure}

\vspace{-2mm}
\begin{remark}
Substituting \eqref{eq:sigma_phi} into the coherence factor $\alpha_m[n]$ gives
$
\alpha_m[n]
=
e^{-\frac{\sigma_{\nu,m}^2}{2}}\cdot
e^{-\frac{(2\pi T_{\rm s})^2\sigma_{f,m}^2}{2}n^2}\cdot
e^{-\frac{\sigma_{\zeta,m}^2}{2}n}
$.
This expression shows that $\alpha_m[n]$ decreases monotonically with $n$. Moreover, its decay consists of three distinct components: a constant loss due to the residual static PMs, a Gaussian-type decay in $n$ due to residual-CFO accumulation, and an exponential-type decay in $n$ due to oscillator PN. Therefore, for relatively small $n$, the PN term may dominate the evolution of $\alpha_m[n]$, whereas for sufficiently large $n$, the residual-CFO term becomes dominant. The corresponding crossover index is approximately given by
$
n_{c,m}\triangleq \frac{\sigma_{\zeta,m}^2}{(2\pi T_{\rm s})^2\sigma_{f,m}^2}
$. 
Fig.~\ref{Coherence_factors} provides an intuitive illustration of the
decay behavior of $\alpha_m[n]$. Taking an arbitrary AP $m$ as an example,
we set $\sigma_{\nu,m}=0.1~{\rm rad}$ and $\sigma_{f,m}=150~{\rm Hz}$,
where the subscript $m$ is omitted for notational simplicity. The oscillator
PN parameters are specified in Section~V. Under this setting, the crossover
index is approximately $n_c=109$. As $\sigma_f$ increases, $n_c$ shifts to
a smaller value, indicating that the residual-CFO-induced decay becomes
dominant earlier.
\end{remark}

\vspace{-3mm}
\begin{remark}\label{remark2}
The scalar form of $\Gamma_k[n]$ in \eqref{Gamma_closed_form} enables a direct physical interpretation of the impact of calibration aging.
As $n$ increases, the coherence factors $\alpha_m[n]$ decrease. Hence, the desired coherent component $\left|\sum_{m\in\mathcal M_k}\alpha_m[n]b_{m,k,k}\right|^2$ is generally weakened under coherent BF, while the BF-gain uncertainty term $\sum_{m\in\mathcal M_k}(1-|\alpha_m[n]|^2)|b_{m,k,k}|^2$ increases. Therefore, calibration aging tends to reduce the useful coherent gain and increase the self-interference.
The effect on inter-user interference is more involved. The interference power
consists of the coherent interference component
$\left|\sum_{m\in\mathcal M_i}\alpha_m[n]b_{m,k,i}\right|^2$
and the residual-distortion component
$\sum_{m\in\mathcal M_i}(1-|\alpha_m[n]|^2)|b_{m,k,i}|^2$.
As $\alpha_m[n]$ decreases, these two components may vary in different
directions, and thus the total interference power is not necessarily monotonic
in $n$. Consequently, the strict monotonicity of $\Gamma_k[n]$ or
$R_k^{\mathrm{sRCE}}[n]$ with respect to $n$ cannot be guaranteed. Nevertheless,
in typical coherent BF settings, the loss of the desired coherent gain
is usually dominant, and the achievable rate lower bound generally exhibits a
decreasing trend as calibration ages.
\end{remark}

\vspace{-5mm}
\subsection{Effective Weighted Sum-Rate Maximization Problem}
\vspace{-1mm}
The closed-form $\Gamma_k[n]$ in \eqref{Gamma_compact} provides a
tractable basis for robust BF design under RCEs. Since the coherence factors vary with the calibration age, the conventional weighted sum-rate (WSR) at a single time instant cannot fully characterize the throughput performance over the whole transmission period. Specifically, optimizing only at the
beginning of the data transmission interval may overestimate the coherent
combining gain, whereas optimizing only at the end may result in an overly
conservative design. Therefore, we aim to design BF vectors that balance the
system throughput over the entire data transmission interval. Considering a coherence interval of length $T$, we assume that the calibration interval is matched to the coherence time, i.e., $T_{\rm c}=T$. Accordingly, we define the EWSR utility as
\begin{equation}
\label{WSR_obj}
    U(\{\mathbf w_k\})
    \triangleq
    \frac{1}{N_{\max}}
    \sum_{n=n_0}^{N_{\max}}
    \sum_{k=1}^{K}
    \omega_k \log_2\left(1+\Gamma_k[n]\right),
\end{equation}
where $\omega_k\geq 0$ denotes the priority weight of user $k$ and
$N_{\max}\triangleq \lfloor T/T_{\rm s}\rfloor$ denotes the total discrete
length of the coherence interval.

A direct optimization of \eqref{WSR_obj} requires evaluating the rate at all
sampling instants from $n_0$ to $N_{\max}$, whose number can reach hundreds or
even thousands, thereby causing a considerable computational burden.
To address this issue, we approximate the dense time average by a continuous-time surrogate:
\begin{equation}
\label{eq:WSR_integral}
    \bar U(\{\mathbf w_k\})
    \triangleq
    \frac{1}{T}
    \int_{T_{\rm gap}}^{T}
    \sum_{k=1}^{K}
    \omega_k
    \log_2\left(1+\Gamma_k(t)\right)
    dt.
\end{equation}
Here, $\Gamma_k(t)$ is obtained from \eqref{Gamma_compact} by replacing
$\alpha_m[n]$ with its continuous-time counterpart $\alpha_m(t)=\exp\left(-\frac{\sigma_{\phi,m}^2(t)}{2}\right)$, where $\sigma_{\phi,m}^2(t)=\sigma_{\nu,m}^2 + (2\pi t)^2\sigma_{f,m}^2+\frac{\sigma_{\zeta,m}^2}{T_{\rm s}}t$.

\begin{figure}[!t]
 \centering
\setlength{\abovecaptionskip}{0pt}
\includegraphics[height=0.18\textwidth]{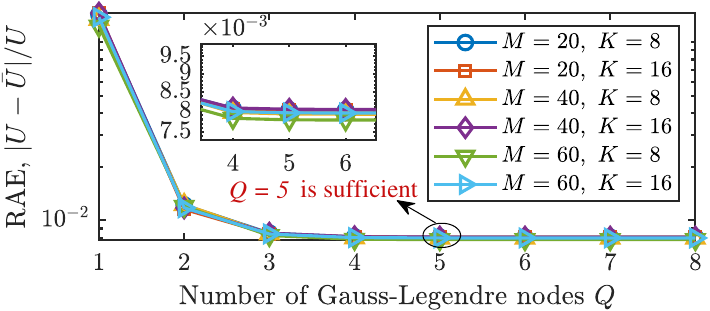}
\caption{RAE of the GL quadrature under different values of $Q$. (The
approximation becomes sufficiently accurate when $Q=5$, where the RAE is below
$0.01$ and can be considered negligible.)}
\label{Gauss_legendre_RAE}
\end{figure}
To efficiently evaluate \eqref{eq:WSR_integral}, we employ the GL quadrature.
For a smooth function $\psi(x)$ on $[-1,1]$, the $Q$-point GL approximation is
given by
\begin{equation}
\label{eq:GLL_standard}
    \int_{-1}^{1}\psi(x)\,dx
    \approx
    \sum_{q=1}^{Q}
    \lambda_q \psi(x_q),
\end{equation}
where $x_q$ is the $q$th root of the $Q$th-order Legendre polynomial and
$\lambda_q$ is the corresponding quadrature weight. By mapping
$[T_{\rm gap},T]$ onto $[-1,1]$, the quadrature nodes in the original time
domain are given by
\begin{equation}
    t_q
    =
    \frac{T-T_{\rm gap}}{2}x_q
    +
    \frac{T+T_{\rm gap}}{2},
    \quad q=1,\ldots,Q .
\end{equation}
Accordingly, \eqref{eq:WSR_integral} can be approximated as
\begin{equation}
\label{eq:WSR_quadrature}
    \bar U(\{\mathbf w_k\})
    \!\approx\!
    \frac{T-T_{\rm gap}}{2T}
    \!\sum_{q=1}^{Q}
    \lambda_q\!
    \sum_{k=1}^{K}
    \omega_k\!
    \log_2\left(1\!+\!\Gamma_k(t_q)\right).
\end{equation}
Compared with the objective in \eqref{WSR_obj}, the
GL-based approximation only requires rate evaluations at $Q$
representative time instants. Therefore, it provides a flexible tradeoff
between approximation accuracy and computational complexity. To further verify the accuracy of the GL approximation, we use the
heuristic maximum-ratio transmission (MRT) BF scheme as an example and compare the GL-based utility with the dense-sampling utility. The relative approximation error (RAE) under different numbers of GL nodes is shown in Fig.~\ref{Gauss_legendre_RAE}. The RAE is defined as the normalized difference
between \eqref{eq:WSR_quadrature} and \eqref{WSR_obj}, i.e., $|U-\bar{U}|/U$.

Based on \eqref{eq:WSR_quadrature}, the robust BF problem is
formulated as
\begin{subequations}
\label{prob:P1}
\begin{align}
\max_{\{\mathbf w_k\}} \quad &
\frac{T-T_{\rm gap}}{2T}
\sum_{q=1}^{Q}
\lambda_q
\sum_{k=1}^{K}
\omega_k
\log_2\!\left(1+\Gamma_k(t_q)\right),
\label{prob:P1_obj}
\\
\text{s.t.}\quad
&
\sum_{k=1}^{K}\|\mathbf w_{m,k}\|_{2}^2\leq P_m,
\quad \forall m\in\mathcal M,
\label{prob:P1_power}
\end{align}
\end{subequations}
where $P_m$ denotes the transmit power budget of AP $m$.

\vspace{-5mm}
\subsection{WMMSE-Based Robust BF Algorithm}
Problem \eqref{prob:P1} is generally non-convex and NP-hard.
To tackle this problem, we adopt the WMMSE framework, which transforms the original WSR maximization problem into an equivalent weighted mean-square error (MSE) minimization problem that is more amenable to block coordinate descent (BCD) updates \cite{wmmse,rwmmse}.

Following the standard WMMSE equivalence, problem \eqref{prob:P1} is equivalent to
\begin{subequations}
\label{prob:P2}
\begin{align}
\min_{\{\mathbf w_k\},\{u_{k,q}\},\{v_{k,q}\}}
&
\sum_{q=1}^{Q}\sum_{k=1}^{K}
\lambda_q\omega_k
\Big(
u_{k,q} e_{k,q}
-\log u_{k,q}
\Big)
\label{prob:P2_obj}\\
\text{s.t.}\quad
&\eqref{prob:P1_power}, \nonumber
\end{align}
\end{subequations}
where $u_{k,q}>0$ and $v_{k,q}\in \mathbb{C}$ are the introduced auxiliary variables. Also, the MSE term $e_{k,q}$ is given by
\begin{align}
e_{k,q}
=\;&
|v_{k,q}|^2
\left(
\sum_{i=1}^{K}
\left|\mathbf g_k^{\mathsf H}(t_q)\mathbf w_i\right|^2
+
\sum_{i=1}^{K}
\mathbf w_i^{\mathsf H}\mathbf C_k(t_q)\mathbf w_i
+
\sigma_k^2
\right)
\nonumber\\
&\;
-2\Re\!\left\{
v_{k,q}^{*}\mathbf g_k^{\mathsf H}(t_q)\mathbf w_k
\right\}
+1 .
\label{eq:mse_kq}
\end{align}

Although problem \eqref{prob:P2} remains jointly non-convex in $\{\mathbf w_k\}$, $\{u_{k,q}\}$, and $\{v_{k,q}\}$, it is convex with respect to each block of variables when the other two blocks are fixed. Therefore, a BCD procedure can be employed.

For fixed $\{\mathbf w_k\}$ and $\{u_{k,q}\}$, the optimal $v_{k,q}^{\star}$ is obtained by setting the first-order derivative of \eqref{eq:mse_kq} with respect to $v_{k,q}^\ast$ to zero, yielding
\begin{equation}
v_{k,q}^{\star}
\!=\!
\frac{\mathbf g_k^{\mathsf H}(t_q)\mathbf w_k}
{\sum_{i=1}^{K}
\left|\mathbf g_k^{\mathsf H}(t_q)\mathbf w_i\right|^2
\!+\!
\sum_{i=1}^{K}
\mathbf w_i^{\mathsf H}\mathbf C_k(t_q)\mathbf w_i
\!+\!
\sigma_k^2 }.
\label{eq:v_update}
\end{equation}
For fixed $\{\mathbf w_k\}$ and $\{v_{k,q}\}$, the optimal $u_{k,q}^{\star}$ is given by
\begin{equation}
u_{k,q}^{\star}=e_{k,q}^{-1}.
\label{eq:u_update}
\end{equation}

For fixed $\{u_{k,q}\}$ and $\{v_{k,q}\}$, retaining only the terms related to
$\{\mathbf w_k\}$ in \eqref{prob:P2_obj} gives the following BF subproblem:
\begin{subequations}
\label{prob:P3}
\begin{align}
\min_{\{\mathbf w_k\}}
\quad &
\sum_{k=1}^{K}
\left(
\mathbf w_k^{\mathsf H}\mathbf A\mathbf w_k
-
2\Re\{\mathbf b_k^{\mathsf H}\mathbf w_k\}
\right)
\label{prob:P3_obj}\\
\text{s.t.}\quad
& \eqref{prob:P1_power}, \nonumber
\label{prob:P3_power}
\end{align}
\end{subequations}
where
\begin{equation}
\mathbf A
\triangleq
\sum_{q=1}^{Q}\sum_{k=1}^{K}
\lambda_q\omega_k u_{k,q}|v_{k,q}|^2
\left(
\mathbf g_k(t_q)\mathbf g_k^{\mathsf H}(t_q)
+
\mathbf C_k(t_q)
\right)
\succeq \mathbf 0,
\label{eq:A_def}
\end{equation}
and
\begin{equation}
\mathbf b_k
\triangleq
\sum_{q=1}^{Q}
\lambda_q\omega_k u_{k,q}v_{k,q}\mathbf g_k(t_q).
\label{eq:b_def}
\end{equation}
Problem \eqref{prob:P3} is a convex quadratically constrained quadratic program (QCQP) problem. In principle, it can be solved via a direct Karush-Kuhn-Tucker (KKT)-based approach. However, such a solution is still computationally demanding. Specifically, the resulting BF admits a closed-form expression involving the inverse of an $MN_t\times MN_t$ matrix. In addition, the $M$ per-AP power constraints introduce $M$ coupled Lagrange multipliers, whose joint optimization leads to a multi-dimensional dual problem with non-negligible computational overhead. To develop a lower-complexity algorithm with closed-form updates, we further exploit the AP-wise separable structure of the constraints and update the BF vectors in an AP-block manner.

To this end, we partition $\mathbf A$ into $M\times M$ blocks and let $\mathbf A_{m,\ell}\in\mathbb C^{N_t\times N_t}$ denote the $(m,\ell)$th block of $\mathbf A$. Similarly, partition $\mathbf b_k$ as
\begin{equation}
\mathbf b_k
=
\begin{bmatrix}
\mathbf b_{1,k}^{\mathsf T},
\mathbf b_{2,k}^{\mathsf T},
\ldots,
\mathbf b_{M,k}^{\mathsf T}
\end{bmatrix}^{\mathsf T}.
\end{equation}
Then the problem in \eqref{prob:P3} can be equivalently rewritten as
\begin{subequations}
\label{prob:P3_block}
\begin{align}
\min_{\{\mathbf w_{m,k}\}} &
\sum_{k=1}^{K}\!
\left[
\sum_{m=1}^{M}\!\sum_{\ell=1}^{M}\!
\mathbf w_{m,k}^\mathsf{H}\mathbf A_{m,\ell}\mathbf w_{\ell,k}
\!-\!
2\Re\!\left\{\!
\sum_{m=1}^{M}\mathbf b_{m,k}^\mathsf{H}\mathbf w_{m,k}\!
\right\}\!
\right]
\label{prob:P3_block_obj}
\\
\text{s.t.}\quad &
\eqref{prob:P1_power}. \nonumber
\end{align}
\end{subequations}
We next adopt an AP-wise BCD strategy to solve \eqref{prob:P3_block}. Since \eqref{prob:P3} is convex and each AP-block subproblem is solved exactly, the AP-block BCD procedure converges to a global optimum of the subproblem.
Specifically, for a given AP $m$, fixing the BF blocks of all other APs,
the AP-block subproblem reduces to
\begin{subequations}
\label{prob:P4}
\begin{align}
\min_{\{\mathbf w_{m,k}\}}
&
\sum_{k\in\mathcal K_m}
\left(
\mathbf w_{m,k}^{\mathsf H}\mathbf A_{m,m}\mathbf w_{m,k}
-
2\Re\{\mathbf d_{m,k}^{\mathsf H}\mathbf w_{m,k}\}
\right)
\label{prob:P4_obj}\\
\text{s.t.}\quad
&
\sum_{k\in\mathcal K_m}\|\mathbf w_{m,k}\|_{2}^2\leq P_m,
\label{prob:P4_power}
\end{align}
\end{subequations}
where
\begin{equation}
\mathbf d_{m,k}
\triangleq
\mathbf b_{m,k}
-
\sum_{\ell\neq m}
\mathbf A_{m,\ell}\mathbf w_{\ell,k}.
\label{eq:d_mk}
\end{equation}

\vspace{-3mm}
Problem \eqref{prob:P4} is a convex quadratic problem with a single quadratic constraint. Its Lagrangian can be written as
\begin{align}
\mathcal L_m
=
\sum_{k\in\mathcal K_m}&
\big(
\mathbf w_{m,k}^\mathsf{H} (\mathbf A_{m,m}+\mu_m\mathbf I_{N_t})\mathbf w_{m,k}\\ \nonumber
&-2\Re\{\mathbf d_{m,k}^\mathsf{H} \mathbf w_{m,k}\}
\big)
-
\mu_m P_m,
\label{eq:Lm}
\end{align}
where $\mu_m\ge 0$ is the Lagrange multiplier. By applying the first-order optimality condition with respect to $\mathbf w_{m,k}^\ast$, we obtain
\begin{equation}
\mathbf w_{m,k}^{\star}(\mu_m)
=
\left(\mathbf A_{m,m}+\mu_m \mathbf I_{N_t}\right)^{-1}\mathbf d_{m,k},
\quad \forall k\in\mathcal K_m.
\label{eq:wmk_mu}
\end{equation}
According to the KKT conditions, if the unconstrained solution with $\mu_m=0$ already satisfies \eqref{prob:P4_power}, then $\mu_m^\star=0$. Otherwise, the optimal multiplier $\mu_m^\star>0$ is the unique root of
\begin{equation}
\phi_m(\mu)
\triangleq
\sum_{k\in\mathcal K_m}
\left\|
\left(\mathbf A_{m,m}+\mu \mathbf I_{N_t}\right)^{-1}\mathbf d_{m,k}
\right\|_2^2
-
P_m
=0.
\label{eq:mu_equation}
\end{equation}
Since $\phi_m(\mu)$ is strictly decreasing for $\mu\ge 0$, the optimal multiplier $\mu_m^\star$ can be efficiently obtained by a one-dimensional bisection search.
By alternating over all APs until convergence, we obtain the solution of problem \eqref{prob:P3}. Compared with a direct solution of \eqref{prob:P3}, the proposed AP-block strategy reduces the matrix inversion dimension from $MN_t$ to $N_t$ and, more importantly, decouples the multi-dimensional dual update into a sequence of simple one-dimensional searches.
The resulting GL-based AP-block WMMSE algorithm is summarized in \textbf{Algorithm~\ref{alg:ap_wmmse}}.  We next analyze its computational complexity and convergence behavior.

\begin{algorithm}[t]
\caption{GL-Based AP-Block WMMSE Algorithm}
\label{alg:ap_wmmse}
\begin{algorithmic}[1]
\State \textbf{Initialization:} Choose a feasible $\{\mathbf w_k^{(0)}\}$ satisfying \eqref{prob:P1_power}; set the iteration index $r=0$.
\Repeat
    \For{$q=1,\ldots,Q$}
        \For{$k=1,\ldots,K$}
            \State Update $v_{k,q}^{(r+1)}$ according to \eqref{eq:v_update}.
            \State Update $u_{k,q}^{(r+1)}$ according to \eqref{eq:u_update}.
        \EndFor
    \EndFor
    \State Form $\mathbf A^{(r+1)}$ and $\{\mathbf b_k^{(r+1)}\}_{k=1}^{K}$ according to \eqref{eq:A_def}, \eqref{eq:b_def}.
    \Repeat
        \For{$m=1,\ldots,M$}
            \State Compute $\mathbf d_{m,k}^{(r+1)}$ for all $k\in\mathcal K_m$ using \eqref{eq:d_mk}.
            \State Update $\mathbf w_{m,k}^{(r+1)}$ for all $k\in\mathcal K_m$ using \eqref{eq:wmk_mu}.
        \EndFor
    \Until{the AP-block BCD for problem \eqref{prob:P3_block} converges}
    \State $r\leftarrow r+1$.
\Until{the objective improvement in \eqref{prob:P1_obj} is below a prescribed tolerance.}
\State \textbf{Output:} $\{\mathbf w_k\}_{k=1}^K$.
\end{algorithmic}
\end{algorithm}

\vspace{-1mm}
\subsubsection{Computational Complexity}
We consider the worst-case computational complexity, where all APs serve all users. In each outer WMMSE iteration, the proposed algorithm mainly consists of three parts: the updates of the auxiliary variables $\{v_{k,q}\}$ and $\{u_{k,q}\}$, the construction of $\mathbf A$ and $\{\mathbf b_k\}_{k=1}^{K}$, and the AP-block BF updates. For each quadrature node $q$ and user $k$, the updates of $v_{k,q}$ and $u_{k,q}$ have a complexity on the order of $\mathcal O(KMN_t)$. Hence, updating all $\{v_{k,q}\}$ and $\{u_{k,q}\}$ over all $Q$ quadrature nodes and $K$ users incurs an overall complexity of $\mathcal O(QK^2MN_t)$.
Before updating the BF vectors, forming $\mathbf A$ and $\{\mathbf b_k\}_{k=1}^K$ according to \eqref{eq:A_def} and \eqref{eq:b_def} requires complexities on the order of $\mathcal O(QKM^2N_t^2)$ and $\mathcal O(QKMN_t)$, respectively. 
Then, the AP-block BF update involves the inner BCD iterations. Based on the update rule in \eqref{eq:wmk_mu}, its overall computational complexity can be expressed as
$\mathcal O\big(
MN_t^3
+
I_{\rm in}
\big(
KM^2N_t^2
+
KMN_t^2
+
KMN_t\log(1/\epsilon)
\big)
\big)$,
where $I_{\rm in}$ denotes the number of inner AP-block BCD iterations, and $\epsilon$ is the prescribed accuracy of the bisection search. The low-complexity implementation of \eqref{eq:wmk_mu}, which leads to the above complexity order, is detailed in the following \textbf{Remark \ref{remark3}}.
Since the complexity associated with updating $\{v_{k,q}\}$ and $\{u_{k,q}\}$ and constructing $\{\mathbf b_k\}_{k=1}^{K}$ is of lower order, the overall complexity is mainly dominated by the construction of $\mathbf A$ and the AP-block BF updates. Therefore, the total computational complexity of the proposed algorithm can be summarized as
$\mathcal O\big(
I_{\rm out}
(
QKM^2N_t^2
+
MN_t^3
+
I_{\rm in}KM^2N_t^2
)
\big)$,
where $I_{\rm out}$ denotes the number of outer WMMSE iterations. 

\vspace{-3mm}
\begin{remark} \label{remark3}
Although the update in \eqref{eq:wmk_mu} appears to require repeated matrix inversions during the bisection search over $\mu_m$, this cost can be substantially reduced since $\mathbf A_{m,m}$ remains fixed throughout the inner AP-block BCD iterations. Hence, $\mathbf A_{m,m}$ can be eigen-decomposed only once as
$
\mathbf A_{m,m}=\mathbf U_m\boldsymbol{\Lambda}_m\mathbf U_m^H
$,
where $\mathbf U_m$ is unitary and
$\boldsymbol{\Lambda}_m=\operatorname{diag}(\lambda_{m,1},\ldots,\lambda_{m,N_t})$. Then, for any trial value of $\mu_m$,
$
(\mathbf A_{m,m}+\mu_m\mathbf I)^{-1}
=
\mathbf U_m(\boldsymbol{\Lambda}_m+\mu_m\mathbf I)^{-1}\mathbf U_m^H
$.
The computation of $\{\mathbf d_{m,k}\}_{k=1}^{K}$ according to \eqref{eq:d_mk} requires $\mathcal O(KMN_t^2)$ operations. By further defining
$
\tilde{\mathbf d}_{m,k}
=
\mathbf U_m^H\mathbf d_{m,k}
=
[\tilde d_{m,k,1},\ldots,\tilde d_{m,k,N_t}]^T
$,
which costs $\mathcal O(KN_t^2)$, the scalar function in \eqref{eq:mu_equation} becomes
\begin{equation}
\phi_m(\mu)
=
\sum_{k\in\mathcal K_m}
\sum_{i=1}^{N_t}
\frac{|\tilde d_{m,k,i}|^2}{(\lambda_{m,i}+\mu)^2}
-
P_m .
\end{equation}
As a result, after the one-time eigenvalue decomposition of $\mathbf A_{m,m}$ with complexity $\mathcal O(N_t^3)$, each bisection step only requires $\mathcal O(KN_t)$ scalar operations. Once the optimal multiplier $\mu_m^\star$ is obtained, the BF vectors can be recovered as
$
\mathbf w_{m,k}^{\star}
=
\mathbf U_m
(\boldsymbol{\Lambda}_m+\mu_m^\star\mathbf I)^{-1}
\tilde{\mathbf d}_{m,k}
$,
which requires $\mathcal O(KN_t^2)$ operations for all users served by AP $m$. Therefore, for one update of AP $m$, the dominant complexity is
$\mathcal O(KMN_t^2+KN_t^2+KN_t\log(1/\epsilon))$, together with the one-time eigenvalue decomposition cost $\mathcal O(N_t^3)$ that is reused throughout the inner BCD iterations.
\end{remark}

\vspace{-3mm}
\subsubsection{Convergence Analysis}
For fixed BF vectors, the auxiliary variables $\{v_{k,q}\}$ and
$\{u_{k,q}\}$ admit closed-form optimal updates. For fixed auxiliary variables,
the BF update subproblem is a convex QCQP, which is solved by exact
minimization over each AP block using the proposed AP-block BCD method.
Accordingly, the objective value of the original problem \eqref{prob:P1} is
non-decreasing after each block update. Since the transmit-power constraints
are bounded, the achievable rate is finite and the objective function is upper
bounded. Hence, the proposed algorithm is guaranteed to converge.

\vspace{-5mm}
\section{Statistical Beamformed-Channel-Based Robust Power Allocation Design}
\vspace{-3mm}
The instantaneous calibrated-channel-based BF design in the previous section adopts a centralized BF architecture, where pilot observations are forwarded from the APs to the CPU for channel estimation and BF design, and the resulting precoded signals are then delivered back to the APs. This fully centralized architecture may incur a considerable fronthaul burden and increase the delay between calibration and data transmission.
To alleviate this issue, we next develop a statistical beamformed-channel-based robust PA scheme. In this scheme, each AP locally computes its precoding direction based on the available calibrated channel estimates and reports only the corresponding statistical beamformed-channel information to the CPU. The CPU then optimizes the network-level PA coefficients and feeds them back to the APs. Compared with the centralized BF design, this distributed architecture reduces the fronthaul overhead, allows the PA coefficients to be updated on a slower time scale, and significantly lowers the computational burden at the CPU. These advantages help shorten the pre-transmission delay $T_{\mathrm{gap}}$, thereby mitigating the SE degradation caused by RCE accumulation.
\vspace{-6mm}
\subsection{Distributed Local BF and Statistical Effective-Channel-Based Achievable Rate Lower Bound}
\vspace{-2mm}
We consider a scenario where the users do not have access to the instantaneous
effective channel gains. Therefore, data detection is performed based on the
statistical effective channels, which can be acquired from long-term channel
statistics.
In the considered distributed processing architecture, each AP locally computes the BF direction based on its available calibrated channel, while the CPU only optimizes the network-level PA coefficients using the statistical beamformed-channel information reported by the APs. Specifically, the transmit BF vector from AP $m$ to user $k$ is denoted as
\begin{equation}
    \mathbf w_{m,k}
    =
    \mu_{m,k}\bar{\mathbf w}_{m,k},
    \label{eq:distributed_beamformer_param}
\end{equation}
where $\bar{\mathbf w}_{m,k}\in\mathbb C^{N_t\times 1}$ denotes the normalized local BF direction satisfying
$\|\bar{\mathbf w}_{m,k}\|_2=1$, and $\mu_{m,k}\geq 0$ is the PA coefficient.

The normalized direction $\bar{\mathbf w}_{m,k}$ can be obtained using different local BF schemes. In this paper, we adopt local linear minimum mean-square error (LMMSE) BF, which is given by~\cite{power_CF}
\begin{equation}
    \bar{\mathbf w}_{m,k}
    \!=\!
    \frac{
    \left(
    \sum_{i\in\mathcal K_m}
    \rho_i
    \mathbf g_{m,i}\mathbf g_{m,i}^{\mathsf H}
    +
    \sigma_{\mathrm{ul}}^2\mathbf I_{N_t}
    \right)^{-1}
    \rho_k \mathbf g_{m,k}
    }{
    \left\|
    \left(
    \sum_{i\in\mathcal K_m}
    \rho_i
    \mathbf g_{m,i}\mathbf g_{m,i}^{\mathsf H}
    +
    \sigma_{\mathrm{ul}}^2\mathbf I_{N_t}
    \right)^{-1}
    \rho_k \mathbf g_{m,k}
    \right\|_2
    },
    \label{eq:local_lmmse_precoder}
\end{equation}
where $\rho_i$ is the uplink transmit power associated with user $i$, and $\sigma_{\mathrm{ul}}^2$ is the uplink receiver noise power.

With the local BF directions determined at the APs, the remaining design variables are the PA coefficients $\{\mu_{m,k}\}$. To characterize the corresponding downlink performance, we apply the use-and-then-forget capacity-bounding technique and obtain the following achievable rate lower bound.
\vspace{-2mm}
\begin{theorem}
\label{theorem:SBC_rate_bound}
When user $k$ has access only to the statistical effective channels, an achievable downlink rate at time instant $n$ is lower bounded by
\begin{equation}
    R_{k}^{\mathrm{sec}}[n]
    =
    \log_2\left(1+\gamma_k^{\mathrm{sec}}[n]\right),
    \label{eq:SBC_rate_bound}
\end{equation}
where the effective SINR $\gamma_k^{\mathrm{sec}}[n]$ is given by
\begin{equation}
\gamma_k^{\mathrm{sec}}[n]
=
\frac{
\left|
\mathbf a_{k,k}^{\mathsf T}[n]\boldsymbol{\mu}_k
\right|^2
}{
\displaystyle
\sum_{i\ne k}
\left|
\mathbf a_{k,i}^{\mathsf T}[n]\boldsymbol{\mu}_i
\right|^2
+
\sum_{i=1}^{K}
\boldsymbol{\mu}_i^{\mathsf T}
\mathbf D_{k,i}[n]
\boldsymbol{\mu}_i
+
\sigma_k^2
}.
\label{eq:SBC_SINR}
\end{equation}
In \eqref{eq:SBC_SINR}, $\boldsymbol{\mu}_i$ denotes the power-allocation vector associated with user $i$, defined as
$\boldsymbol{\mu}_i
\triangleq
[\mu_{1,i},\mu_{2,i},\ldots,\mu_{M,i}]^{\mathsf T}
\in\mathbb R_+^{M\times 1}$,
where $\mu_{m,i}=0$ if AP $m\notin\mathcal M_i$. Moreover, $\mathbf a_{k,i}[n]$ and $\mathbf D_{k,i}[n]$ are respectively defined as
\begin{equation}
\mathbf a_{k,i}[n]
\triangleq
\left[
\alpha_1[n]\bar c_{1,k,i},
\alpha_2[n]\bar c_{2,k,i},
\ldots,
\alpha_M[n]\bar c_{M,k,i}
\right]^{\mathsf T},
\label{eq:stat_a_vector}
\end{equation}
and
\begin{align}
\mathbf D_{k,i}[n]
\triangleq
\operatorname{diag}
\big(&
\Omega_{1,k,i}
-
|\alpha_1[n]|^2|\bar c_{1,k,i}|^2,\\ \nonumber
&\ldots,
\Omega_{M,k,i}
-
|\alpha_M[n]|^2|\bar c_{M,k,i}|^2
\big).
\label{eq:stat_D_matrix}
\end{align}
Here, $\bar c_{m,k,i}$ and $\Omega_{m,k,i}$ denote, respectively, the first- and second-order statistical moments of the local effective scalar channel coefficient 
$c_{m,k,i}\triangleq \mathbf g_{m,k}^{\mathsf H}\bar{\mathbf w}_{m,i}$, i.e.,
$\bar c_{m,k,i}\triangleq \mathbb E_{\mathcal{G}}[c_{m,k,i}]$ and
$\Omega_{m,k,i}\triangleq \mathbb E_{\mathcal{G}}[|c_{m,k,i}|^2]$.
\end{theorem}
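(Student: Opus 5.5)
The plan is the standard use-and-then-forget (UatF) bound, now averaging jointly over the RCEs $\boldsymbol\phi$ and the calibrated channels $\mathcal G$. Substituting $\mathbf w_{m,i}=\mu_{m,i}\bar{\mathbf w}_{m,i}$ into \eqref{y_ins_G_phi} gives
\[
y_k[n]=\sum_{i=1}^{K}\sum_{m\in\mathcal M_i}e^{j\phi_m[n]}\mu_{m,i}c_{m,k,i}\,s_i[n]+n_k[n].
\]
Since user $k$ knows only the statistical effective channel, we add and subtract the mean of its own effective gain. This yields the desired signal $\mathsf{DS}_k=\mathbb E_{\boldsymbol\phi,\mathcal G}\bigl[\sum_m e^{j\phi_m[n]}\mu_{m,k}c_{m,k,k}\bigr]$, a zero-mean beamforming-uncertainty term $\mathsf{BU}_k$, the interference terms $\mathsf{UI}_{k,i}$, and the noise. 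As in the derivation of \eqref{Rk_sRCE}, $\mathsf{BU}_k$ is uncorrelated with $\mathsf{DS}_k s_k$, and the interference and noise are uncorrelated with $s_k$ because the symbols are independent of the channels and RCEs. The worst-case uncorrelated Gaussian noise argument then gives $R_k^{\mathrm{sec}}[n]=\log_2(1+\gamma)$ with
\[
\gamma=\frac{|\mathsf{DS}_k|^2}{\mathbb E[|\mathsf{BU}_k|^2]+\sum_{i\neq k}\mathbb E[|\mathsf{UI}_{k,i}|^2]+\sigma_k^2}.
\]
There is no outer expectation, because the SINR is already fully averaged.

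The next step is to evaluate the moments. Because $\phi_m[n]$ is Gaussian with zero mean (the zero-mean Gaussians $\tilde\nu,\tilde f$ plus the Wiener PN), $\mathbb E[e^{j\phi_m[n]}]=e^{-\sigma_{\phi,m}^2[n]/2}=\alpha_m[n]$ by \eqref{eq:alpha_def}; for the master AP, $\alpha_1=1$ trivially. The RCEs are independent of $\mathcal G$, and $\mu_{m,k}$ is deterministic, so $\mathsf{DS}_k=\sum_m\alpha_m[n]\mu_{m,k}\bar c_{m,k,k}=\mathbf a_{k,k}^{\mathsf T}[n]\boldsymbol\mu_k$. For the second moments, $\mathbb E[|\sum_m e^{j\phi_m}\mu_{m,i}c_{m,k,i}|^2]$ is expanded into diagonal terms $m=\ell$, which give $\mu_{m,i}^2\Omega_{m,k,i}$, and cross terms $m\neq\ell$, which give $\mu_{m,i}\mu_{\ell,i}\alpha_m\alpha_\ell\,\mathbb E[c_{m,k,i}c_{\ell,k,i}^*]$. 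The cross terms use independence of the RCEs across slave APs, which makes $\mathbb E[e^{j(\phi_m-\phi_\ell)}]=\alpha_m\alpha_\ell$.

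To factor the cross-AP channel moment as $\bar c_{m,k,i}\bar c_{\ell,k,i}^*$, we use that $c_{m,k,i}$ depends only on AP-$m$-local quantities $\{\mathbf g_{m,j}\}_{j\in\mathcal K_m}$, since $\bar{\mathbf w}_{m,i}$ is the local LMMSE direction \eqref{eq:local_lmmse_precoder}. The channels $\mathbf h_{m,k}$ of different APs are independent, so $c_{m,k,i}$ and $c_{\ell,k,i}$ are independent for $m\neq\ell$. Summing all $m,\ell$ pairs, then adding and subtracting $\alpha_m^2|\bar c_{m,k,i}|^2$ on the diagonal, gives
\[
\mathbb E[|\cdot|^2]=|\mathbf a_{k,i}^{\mathsf T}[n]\boldsymbol\mu_i|^2+\boldsymbol\mu_i^{\mathsf T}\mathbf D_{k,i}[n]\boldsymbol\mu_i.
\]
For $i=k$, $\mathbb E[|\mathsf{BU}_k|^2]$ is this total minus $|\mathsf{DS}_k|^2$, which leaves only $\boldsymbol\mu_k^{\mathsf T}\mathbf D_{k,k}\boldsymbol\mu_k$. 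Collecting terms reproduces \eqref{eq:SBC_SINR}, where the $i=k$ distortion term supplies the $i=k$ entry of $\sum_{i=1}^K$.

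The main obstacle is justifying the cross-term factorization. It requires independence of $\phi_m$ and $\phi_\ell$; this holds if the residual estimation errors and the differential PN are independent across slave APs, although the shared master component $\zeta_1$ in $\zeta_{\mathrm d,m}$ technically correlates them. It also requires independence of the local coefficients $c_{m,k,i}$ across APs. We state both as assumptions consistent with the model; for the RCEs, this matches the treatment implicitly used in Theorem~\ref{theorem1}, whose $\mathbf C_k[n]$ has the same $(1-\alpha_m^2)$ diagonal-only structure.
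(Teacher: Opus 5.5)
Your proposal is correct and follows the paper's route. The paper's omitted proof is exactly the Theorem~1 argument (the decomposition into $\mathsf{DS}$/$\mathsf{BU}$/$\mathsf{UI}$ terms plus the worst-case uncorrelated noise bound), with the expectation now taken jointly over $\boldsymbol\phi$ and $\mathcal G$, using $\mathbb E[e^{j\phi_m[n]}]=\alpha_m[n]$ and independence of the RCEs across APs (which Appendix~B likewise simply assumes despite the shared master PN). Your explicit point that the diagonal form of $\mathbf D_{k,i}[n]$ also needs the local coefficients $c_{m,k,i}$ to be independent across APs is a step the paper leaves implicit.
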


\begin{proof}
The proof follows similar steps to that of \textbf{Theorem~\ref{theorem1}} and is therefore omitted for brevity. The only difference is that the expectation is now taken jointly over both the RCEs $\boldsymbol\phi$ and the calibrated channel $\mathcal{G}$.
\end{proof}

\vspace{-5mm}
\subsection{WMMSE-Based Robust PA Algorithm}
By applying the same GL quadrature approximation as in the Section III, the EWSR maximization problem can be formulated in terms of the PA coefficients as follows:
\begin{subequations}
\label{power_control_problem}
\begin{align}
\max_{\{\boldsymbol{\mu}_k\}}
\quad
&
\frac{T-T_{\mathrm{gap}}}{2T}
\sum_{q=1}^{Q}\lambda_q
\sum_{k=1}^{K}
\omega_k
R_{k}^{\mathrm{sec}}(t_q)
\\
\mathrm{s.t.}
\quad
&
\sum_{k=1}^K\mu_{m,k}^2
\le P_m,
\quad \forall m\in\mathcal M,
\label{eq:power_control_problem_b}
\\
&
\mu_{m,k}\ge 0,
\quad \forall m,k.
\label{eq:power_control_problem_c}
\end{align}
\end{subequations}

Since the PA problem has the same sum-logarithmic structure as the centralized BF problem, we again apply the WMMSE transformation. For notational simplicity, we reuse the symbols $u_{k,q}$ and $v_{k,q}$.
The derivation is omitted for brevity, and only the resulting closed-form updates are presented below.
The optimal $u_{k,q}^{\star}$ and $v_{k,q}^{\star}$ can be obtained as follows:
\begin{equation}
v_{k,q}^{\star}
\!=\!
\frac{
\mathbf a_{k,k}^{\mathsf T}(t_q)\boldsymbol{\mu}_k
}{
\sum_{i=1}^{K}
\left|
\mathbf a_{k,i}^{\mathsf T}(t_q)\boldsymbol{\mu}_i
\right|^2
\!+\!
\sum_{i=1}^{K}
\boldsymbol{\mu}_i^{\mathsf T}
\mathbf D_{k,i}(t_q)
\boldsymbol{\mu}_i
\!+\!
\sigma_k^2
}.
\label{eq:v_mu_update}
\end{equation}
\begin{equation}
u_{k,q}^{\star}
\!\!=\!\!
\frac{
\sum_{i=1}^K
\left|
\mathbf a_{k,i}^{\mathsf T}(t_q)\boldsymbol{\mu}_i
\right|^2
\!+\!
\sum_{i=1}^{K}
\boldsymbol{\mu}_i^{\mathsf T}
\mathbf D_{k,i}(t_q)
\boldsymbol{\mu}_i
\!+\!
\sigma_k^2
}{
\sum_{i\ne k}
\left|
\mathbf a_{k,i}^{\mathsf T}(t_q)\boldsymbol{\mu}_i
\right|^2
\!+\!
\sum_{i=1}^{K}
\boldsymbol{\mu}_i^{\mathsf T}
\mathbf D_{k,i}(t_q)
\boldsymbol{\mu}_i
\!+\!
\sigma_k^2
}.
\label{eq:u_mu_update}
\end{equation}

For fixed $\{u_{k,q}\}$ and $\{v_{k,q}\}$, the terms independent of the
PA coefficients can be omitted. The resulting subproblem with
respect to $\{\boldsymbol{\mu}_k\}$ is given by
\begin{subequations}
\label{power_control_problem_BCD}
\begin{align}
\min_{\{\boldsymbol{\mu}_k\}}
\quad
&
\sum_{k=1}^{K}
\left(
\boldsymbol{\mu}_k^{\mathsf T}\mathbf H_k\boldsymbol{\mu}_k
-
2\mathbf p_k^{\mathsf T}\boldsymbol{\mu}_k
\right)
\label{eq:mu_subproblem_a}
\\
\mathrm{s.t.}
\quad
& \eqref{eq:power_control_problem_b}, \eqref{eq:power_control_problem_c}, \nonumber
\end{align}
\end{subequations}
where
\begin{equation}
\mathbf H_k
\!\!\triangleq\!\!
\Re\!\left\{\!
\sum_{q=1}^{Q}\!
\sum_{j=1}^{K}\!
\lambda_q\omega_j u_{j,q}|v_{j,q}|^2
\left(
\mathbf a_{j,k}^{*}(t_q)\mathbf a_{j,k}^{\mathsf T}(t_q)
\!+\!
\mathbf D_{j,k}(t_q)
\right)
\!\!\right\},
\label{eq:H_k_mu_def}
\end{equation}
and
\begin{equation}
\mathbf p_k
\triangleq
\Re\left\{
\sum_{q=1}^{Q}
\lambda_q\omega_k u_{k,q}v_{k,q}^{*}
\mathbf a_{k,k}(t_q)
\right\}.
\label{eq:p_k_mu_def}
\end{equation}

To exploit the per-AP power constraints, we further update the PA
coefficients in an AP-block manner. Let $h_{m,\ell}^{(k)}$ denote the
$(m,\ell)$th element of $\mathbf H_k$, and let $p_m^{(k)}$ denote the $m$th
element of $\mathbf p_k$.
When the PA coefficients associated with APs $\ell\neq m$ are
fixed, the subproblem with respect to $\{\mu_{m,k}\}_{k\in\mathcal K_m}$ becomes
\begin{subequations}
\label{power_control_problem_APm}
\begin{align}
\min_{\{\mu_{m,k}\}}
\quad
&
\sum_{k\in\mathcal K_m}
\left(
h_{m,m}^{(k)}\mu_{m,k}^2
-
2d_{m,k}\mu_{m,k}
\right)
\label{eq:mu_m_subproblem_a}
\\
\mathrm{s.t.}
\quad
&
\sum_{k\in\mathcal K_m}\mu_{m,k}^2
\le P_m,
\label{eq:mu_m_subproblem_b}
\\
&
\mu_{m,k}\ge 0,
\quad \forall k\in\mathcal K_m,
\label{eq:mu_m_subproblem_c}
\end{align}
\end{subequations}
where
\begin{equation}
d_{m,k}
\triangleq
p_m^{(k)}
-
\sum_{\ell\ne m}
h_{m,\ell}^{(k)}\mu_{\ell,k}.
\label{eq:d_mk_mu_def}
\end{equation}

The Lagrangian of \eqref{power_control_problem_APm} is given by
\begin{equation}
\mathcal L_m
\!=\!
\sum_{k\in\mathcal K_m}\!\!
\left(
(h_{m,m}^{(k)}+\xi_m)\mu_{m,k}^2
\!-\!
2d_{m,k}\mu_{m,k}
\right)
-
\xi_m P_m,
\label{eq:mu_m_lagrangian}
\end{equation}
where $\xi_m\ge 0$ is the Lagrange multiplier. By applying the KKT conditions and taking the non-negativity constraint into account, the optimal update of $\mu_{m,k}$ for a given $\xi_m$ is
\begin{equation}
\mu_{m,k}^{\star}(\xi_m)
=
\left[
\frac{d_{m,k}}
{h_{m,m}^{(k)}+\xi_m}
\right]^+,
\quad \forall k\in\mathcal K_m,
\label{eq:mu_mk_closed_form}
\end{equation}
where $[x]^+\triangleq \max\{x,0\}$.
The optimal $\xi_m^\star$ can be efficiently obtained by a one-dimensional bisection search.
By cyclically updating $\{\mu_{m,k}\}_{k\in \mathcal K_m}$ for $m=1,\ldots,M$, the PA subproblem in \eqref{power_control_problem_BCD} can be solved with low complexity. Compared with directly optimizing all power coefficients jointly, the AP-block update only involves scalar closed-form operations and a one-dimensional search for each AP. 
The overall robust PA algorithm follows a procedure similar to the GL-based AP-block WMMSE algorithm in \textbf{Algorithm~\ref{alg:ap_wmmse}}, and hence its pseudocode is omitted to avoid repetition.
We next provide its computational complexity and convergence behavior.


\subsubsection{Computational Complexity}
We also consider the worst-case computational complexity. In each outer WMMSE iteration, the proposed algorithm consists of three main steps: updating the auxiliary variables $\{v_{k,q}\}$ and $\{u_{k,q}\}$, constructing $\{\mathbf H_i\}_{i=1}^{K}$ and $\{\mathbf p_i\}_{i=1}^{K}$, and solving the power-allocation subproblem via AP-block BCD.
For each quadrature node $q$ and user $k$, computing $v_{k,q}$ and $u_{k,q}$ requires a complexity of $\mathcal O(KM)$. Therefore, updating all auxiliary variables over $Q$ quadrature nodes and $K$ users incurs a complexity of $\mathcal O(QK^2M)$. Next, constructing $\{\mathbf H_i\}_{i=1}^{K}$ and $\{\mathbf p_i\}_{i=1}^{K}$ requires complexities of $\mathcal O(QK^2M^2)$ and $\mathcal O(QKM)$, respectively.
For the AP-block BCD update, according to \eqref{eq:mu_mk_closed_form}, the complexity for each AP mainly comes from computing $\{d_{m,k}\}_{k\in\mathcal K}$, which requires $\mathcal O(KM)$ operations, and from finding the optimal Lagrange multiplier $\xi_m$ by bisection, which requires $\mathcal O(K\log(1/\epsilon))$ operations. Therefore, one complete inner AP-block BCD sweep over all APs has complexity $\mathcal O\left(KM^2+KM\log(1/\epsilon)\right)$.
By retaining only the dominant-order terms, the overall computational complexity of the proposed algorithm can be summarized as
$\mathcal O\left(
I_{\rm out}
\left(
QK^2M^2
+
I_{\rm in}KM^2
\right)
\right)
$.

\subsubsection{Convergence Analysis}
The convergence analysis is analogous to that of \textbf{Algorithm~\ref{alg:ap_wmmse}} and is omitted for brevity.

\begin{figure}[!t]
\vspace{-8pt}
    \centering
    \setlength{\abovecaptionskip}{0pt}

    \subfloat[Robust BF]{
        \includegraphics[width=0.44\columnwidth]{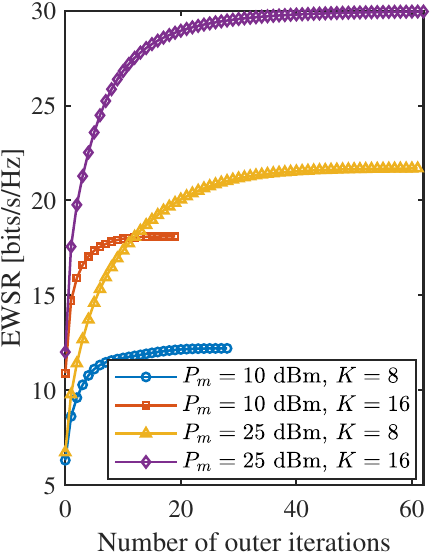}
        \label{fig:con_bf}
    }
    \subfloat[Robust PA]{
        \includegraphics[width=0.44\columnwidth]{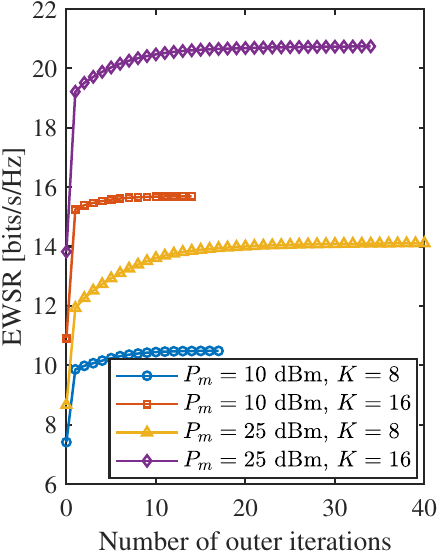}
        \label{fig:con_pa}
    }

    \caption{Convergence curves of the proposed robust algorithms.}
    \label{fig:con_al}
    \vspace{-3pt}
\end{figure}
\vspace{-5mm}
\section{Numerical Results}
\vspace{-2mm}
We consider an urban environment with $M=40$ APs randomly distributed over a 
$1 \times 1~\mathrm{km}^2$ area. The AP and user antenna heights are set to 
$10~\mathrm{m}$ and $1.5~\mathrm{m}$, respectively. The carrier frequency is 3.5 $\mathrm{GHz}$.
According to the 3GPP UMi street-canyon model, the large-scale fading 
between AP $m$ and user $k$ is modeled as\cite{3gpp_tr38901}
\begin{equation}\label{eq:beta_mk}
\beta_{m,k}[\mathrm{dB}] 
= 
-35.3\log_{10}\left(\frac{d_{m,k}}{1~\mathrm{m}}\right)
-34.0
+F_{m,k},
\end{equation}
where $d_{m,k}$ denotes the three-dimensional distance between AP $m$ and user $k$ and $F_{m,k}\sim\mathcal{N}(0,7.82^2)$ models the log-normal shadow fading. To
capture spatial consistency, the shadow-fading terms associated with the same
AP and different users are generated according to an exponential correlation
model, i.e.,
\begin{equation}
\mathbb{E}\{F_{m,k}F_{n,i}\}
=
\begin{cases}
7.82^2
\exp\left(-\dfrac{\delta_{k,i}}{13~\mathrm{m}}\right),
& m=n,\\[2mm]
0,
& m\neq n,
\end{cases}
\end{equation}
where $\delta_{k,i}$ is the horizontal distance between user $k$ and user $i$.
The shadow-fading fields associated with different APs are assumed to be
uncorrelated, which is reasonable for geographically separated APs\cite{Emil_making_CF}. Moreover, the system bandwidth is set to $B=20~\mathrm{MHz}$, the receiver noise figure is $n_f=9~\mathrm{dB}$, and the AWGN power spectral density is $n_0=-174~\mathrm{dBm/Hz}$. The maximum transmit power is $25~\mathrm{dBm}$.
The calibration interval is set to $T_{\rm c}=T=2~\mathrm{ms}$. With a symbol
duration of $T_{\rm s}=10~\mu\mathrm{s}$, the number of temporal samples within
one calibration interval is
$N_{\rm c}=\lfloor T_{\rm c}/T_{\rm s}\rfloor=200$. The oscillator-dependent
coefficient is set to $c_m=10^{-18}$, $\forall m\in\mathcal M$
\cite{JSAC_LO}. Unless otherwise specified, the standard deviations of the
residual CFO and PM errors are set to $\sigma_{f,m}=80~\mathrm{Hz}$ and
$\sigma_{\nu,m}=0.1~\mathrm{rad}$, respectively. 

\begin{figure}[!t]
\vspace{-8pt}
    \centering
    \setlength{\abovecaptionskip}{0pt}

    \subfloat[$K=8$]{
        \includegraphics[width=0.44\columnwidth]{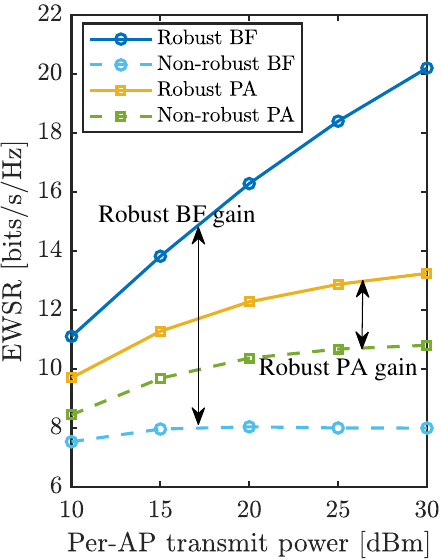}
        \label{power_ewsr_k8}
    }
    \subfloat[$K=16$]{
        \includegraphics[width=0.44\columnwidth]{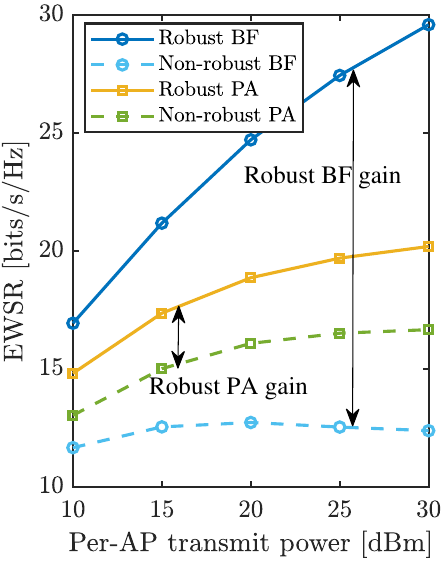}
        \label{power_ewsr_k16}
    }

    \caption{EWSR versus per-AP transmit power for robust and non-robust BF/PA designs.}
    \label{power_ewsr}
    \vspace{-5pt}
\end{figure}
\vspace{-4mm}
\subsection{Convergence Performance}
Figure \ref{fig:con_al} illustrates the convergence behavior of the proposed robust BF and PA algorithms under different per-AP transmit powers and numbers of users. For robust BF, the BF vectors are initialized by MRT. For
robust PA, the PA coefficients are initialized by equal power
allocation among the served users. The convergence tolerances of both the
outer and inner BCD iterations are set to $10^{-4}$.
As shown in Fig.~\ref{fig:con_al}\subref{fig:con_bf} and Fig.~\ref{fig:con_al}\subref{fig:con_pa}, both algorithms exhibit stable and monotonic convergence under all considered settings. Compared with robust BF, robust PA requires fewer outer iterations to converge. Together with the preceding complexity analysis, this demonstrates that the robust PA scheme has not only lower per-iteration computational complexity but also faster practical convergence than the robust BF scheme. Therefore, the proposed robust PA scheme has the potential to reduce the pre-transmission processing delay, thereby shortening the time gap between calibration and data transmission and mitigating the impact of RCE evolution.

\begin{figure}[!t]
 \centering
\setlength{\abovecaptionskip}{0pt}
\includegraphics[height=0.3\textwidth]{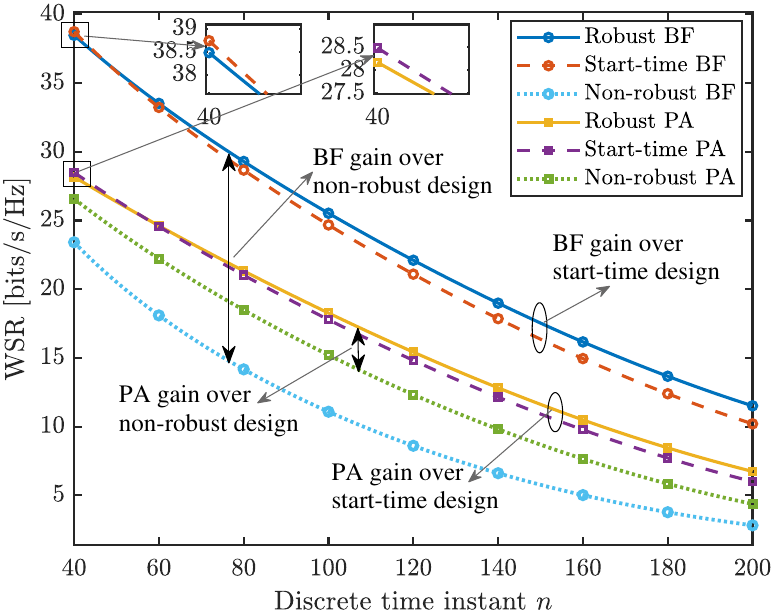}
\caption{WSR evolution of robust, start-time, and non-robust BF/PA designs over the data transmission interval under RCEs.}
\label{Time_instant}
\vspace{-1pt}
\end{figure}
\vspace{-4mm}
\subsection{EWSR Versus Per-AP Transmit Power}
To evaluate the effectiveness of the proposed robust BF and PA algorithms, we
compare them with their non-robust counterparts. The non-robust schemes are
designed under the perfect calibration assumption, while the obtained
BF vectors or PA coefficients are evaluated under the actual
RCE evolution during downlink data transmission.
Fig.~\ref{power_ewsr} shows the EWSR versus the per-AP transmit power for
different numbers of users. As observed from Fig.~\ref{power_ewsr}\subref{power_ewsr_k8} and Fig.~\ref{power_ewsr}\subref{power_ewsr_k16}, the proposed robust BF and PA schemes consistently outperform their non-robust counterparts over the considered power range, which confirms the necessity of explicitly incorporating RCE statistics into the design. The performance gain becomes more evident as the transmit power increases. This is because the system gradually shifts from a noise-limited regime to an interference/error-limited regime, where RCE-induced inter-user interference and BF-gain uncertainty become more pronounced. Since the non-robust schemes ignore these impairments during optimization, their designs become mismatched with the actual downlink channels, leading to reduced desired-signal enhancement and increased inter-user interference and self-distortion caused by BF-gain uncertainty.

It is also observed that the robust gain of BF is larger than that of PA. This
is expected since BF has more spatial degrees of freedom (DoFs) and relies more
strongly on coherent inter-AP signal superposition and interference
suppression. Under RCEs, the effective coherent channel is attenuated by the
coherence factors, and the self-distortion term becomes nonzero. These
effects disturb the coherent combining and interference-suppression structures
designed under the perfect calibration assumption. Consequently, non-robust BF
is more sensitive to RCEs and may even suffer performance
degradation at high transmit power, as shown more clearly in
Fig.~\ref{power_ewsr}\subref{power_ewsr_k16}. In contrast, PA only adjusts real-valued power coefficients based on fixed BF directions, and is therefore less sensitive to inter-AP phase mismatch. This explains the milder degradation of the non-robust PA scheme.

\begin{figure}[!t]
    \vspace{-10pt}
    \centering
    \setlength{\abovecaptionskip}{0pt}

    \subfloat[Robust BF]{
        \includegraphics[width=0.45\columnwidth]{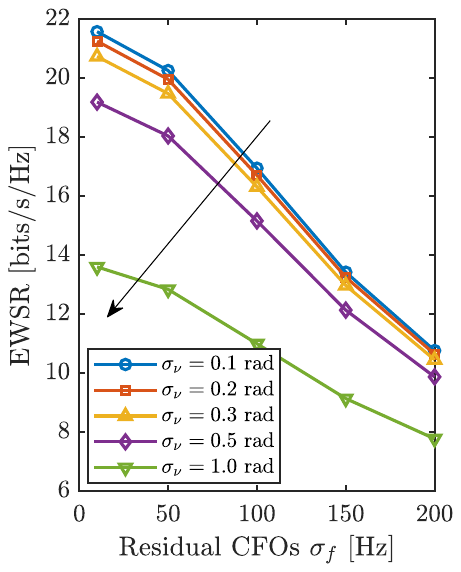}
        \label{sigmaF_ewsr_BF}
    }
    \subfloat[Robust PA]{
        \includegraphics[width=0.45\columnwidth]{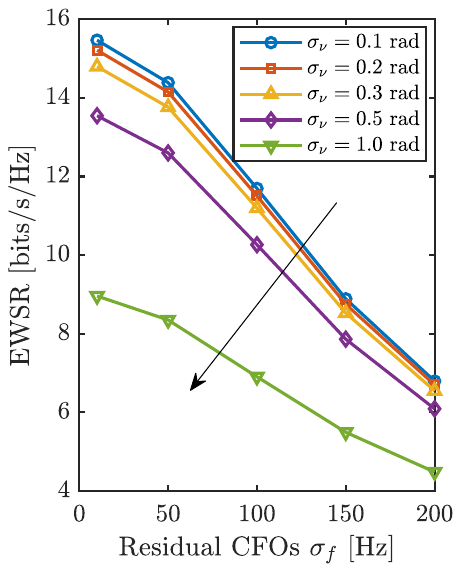}
        \label{sigmaF_ewsr_PA}
    }

    \caption{EWSR versus residual CFOs and PMs.}
    \label{sigmaF_ewsr}
    \vspace{-1pt}
\end{figure}
\vspace{-4mm}
\subsection{WSR Evolution at Discrete Time Instants}
Figure~\ref{Time_instant} illustrates the WSR evolution over the data
transmission interval within one coherence block. The start-time BF and PA
schemes are optimized only for the RCE-affected channel at the
beginning of the data transmission phase.
The results further support the discussion in \textbf{Remark~\ref{remark2}}. As the discrete time
instant increases, the WSR of all schemes gradually decreases, reflecting the
performance loss caused by calibration aging. Nevertheless, the proposed
robust BF and PA schemes consistently outperform their non-robust counterparts
throughout the transmission interval, which confirms the benefit of accounting
for RCE evolution in the design.
The gain over the start-time schemes is relatively moderate. At the beginning
of the transmission interval, the start-time schemes may achieve slightly
higher WSR since they are optimized for that specific time instant. In contrast,
the proposed robust schemes optimize the GL-based time-averaged objective over
multiple representative time instants, and therefore provide better robustness
over the whole interval. Hence, start-time designs can serve as low-complexity
alternatives when pre-transmission processing latency is critical, whereas the
proposed robust designs are more suitable for improving the interval-wise
average performance.

\vspace{-5mm}
\subsection{Impact of Residual CFOs and PMs}
Figure~\ref{sigmaF_ewsr} investigates the impact of residual CFOs and PMs on the EWSR performance of the proposed robust BF and PA schemes.
As shown in Fig.~\ref{sigmaF_ewsr}\subref{sigmaF_ewsr_BF} and Fig.~\ref{sigmaF_ewsr}\subref{sigmaF_ewsr_PA}, the EWSR of both schemes decreases as $\sigma_f$ increases. This is because the residual CFO induces a time-accumulated phase drift during the data transmission phase, which gradually reduces the inter-AP coherent combining gain and increases the effect of inter-user interference and BF-gain uncertainty. Increasing $\sigma_\nu$ also causes an additional EWSR loss. Different from the CFO-induced phase drift, the residual PM introduces a static phase uncertainty after calibration, which results in an overall coherent-gain reduction. It is observed that increasing $\sigma_\nu$ from $0.1$ rad, approximately $5.7^\circ$, to $0.3$ rad, approximately $17.2^\circ$, does not result in a severe EWSR degradation. This level of residual PM accuracy can typically be supported by existing techniques \cite{COTS}.
For the residual CFOs, the performance loss is relatively mild when $\sigma_f$ is below $50$ Hz, but becomes much more pronounced beyond this value. Considering that the CFO mismatch among independently operated and uncalibrated APs may range from hundreds of Hz to tens of kHz in practice \cite{AirShare}, high-accuracy calibration is therefore essential for maintaining reliable coherent transmission.

\begin{figure}[!t]
 \centering
\setlength{\abovecaptionskip}{0pt}
\includegraphics[height=0.3\textwidth]{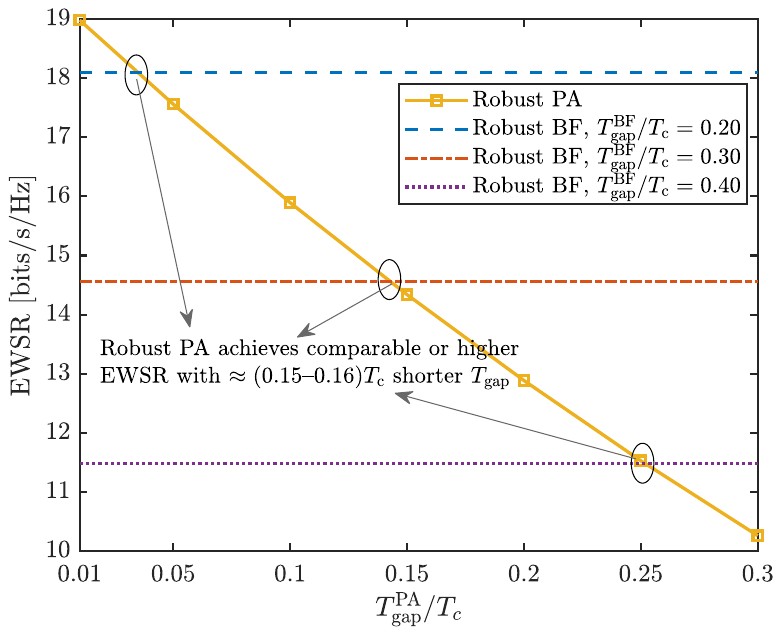}
\caption{EWSR tradeoff between robust PA and robust BF with different normalized pre-transmission delays.}
\label{GapTradeoff}
\end{figure}
\vspace{-4mm}
\subsection{Tradeoff Between Robust BF and PA Under Different Pre-transmission Delays}
Since robust PA incurs lower fronthaul and computational overhead than robust
BF, we further examine the EWSR tradeoff between the two schemes under
different pre-transmission delays. Specifically, robust PA is evaluated with a
varying normalized delay $T_{\mathrm{gap}}^{\mathrm{PA}}/T_c$, while robust BF
is evaluated under several fixed values of
$T_{\mathrm{gap}}^{\mathrm{BF}}/T_{\rm c}$.
As shown in Fig.~\ref{GapTradeoff}, robust PA can achieve comparable, or even
higher, EWSR than robust BF when its pre-transmission delay is approximately
$(0.15\text{--}0.16)T_{\rm c}$ shorter than that of robust BF. This indicates that although robust BF provides higher BF design DoFs, its performance advantage may be offset by the additional processing and fronthaul delay before data transmission. Therefore, robust PA offers a favorable tradeoff between EWSR
performance and implementation latency.

\begin{figure}[!t]
 \centering
\setlength{\abovecaptionskip}{0pt}
\includegraphics[height=0.3\textwidth]{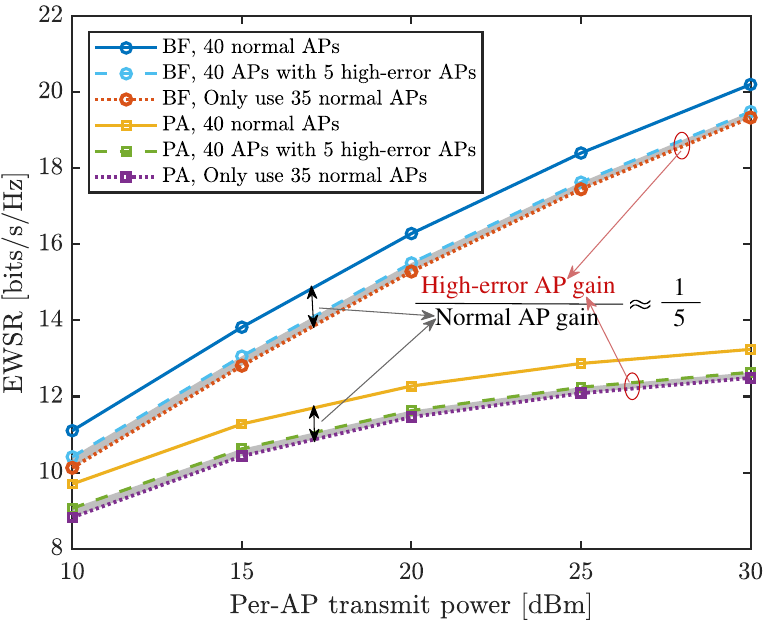}
\caption{EWSR comparison for normal and high-error AP participation. (The shaded regions indicate the EWSR gain obtained by activating the five high-error APs.)}
\label{Good_bad_AP}
\end{figure}

\vspace{-4mm}
\subsection{EWSR Gain from High-Error AP Participation}
Figure~\ref{Good_bad_AP} further investigates the EWSR gain brought by APs with large RCEs. Specifically, we consider a scenario where five APs suffer from high RCEs, with $\sigma_\nu=0.3$ rad and $\sigma_f=200$ Hz, while the normally calibrated APs have $\sigma_\nu=0.1$ rad and $\sigma_f=80$ Hz.
As shown in Fig.~\ref{Good_bad_AP}, activating all APs still provides a slight EWSR improvement compared with directly excluding the five high-error APs. This is expected, since the all-AP case has a larger feasible set and the robust optimization can reduce the transmit power weights assigned to unreliable APs. However, the additional gain provided by the five high-error APs is very limited. For both robust BF and robust PA, this gain is only about one fifth of the gain obtained by activating five normally calibrated APs.
This result indicates that APs with large RCEs may still contribute positively,
but their contribution can be small. Therefore, in practical AP scheduling, the calibration quality should be considered together with channel strength and system overhead. Excluding poorly calibrated APs may reduce fronthaul and scheduling overhead, and this deserves further investigation.
\vspace{-4mm}
\section{Conclusion}
This paper studied robust downlink BF and PA for TDD CF-mMIMO systems under time-evolving RCEs. We first developed an RCE evolution model that captures residual PMs, CFOs, and oscillator PN after each calibration update. Based on this model, we developed two CF implementation architectures, namely a fully centralized robust BF architecture and a partially centralized robust PA architecture. For the former, the CPU designs the BF vectors based on instantaneous calibrated channels. In the latter, the BF directions are locally computed at the APs, while the CPU only performs PA based on statistical beamformed channels. In both architectures, the statistical information of RCEs was explicitly incorporated to account for calibration aging. To support algorithm design, tractable achievable rate lower bounds were derived for both architectures. EWSR maximization problems were then formulated to characterize the system performance over the overall data transmission interval, and two GL-based AP-block WMMSE algorithms were developed to solve them efficiently with low computational complexity. Numerical results confirmed the convergence and effectiveness of the proposed algorithms and demonstrated the importance of explicitly incorporating RCE evolution into downlink design. The results also showed that robust BF achieves higher EWSR gains, whereas robust PA provides a lower-latency implementation. Future work will further consider calibration-aware AP scheduling and AI-enabled low-complexity robust design.

\appendices
\vspace{-4mm}
\section*{Appendix A: Proof of Lemma~1}
Let $S=s_k[n]$ and $Y=y_k[n]$ denote the transmitted symbol and received signal of user $k$ at the $n$th time instant, respectively. Consider a fixed BF policy that depends only on $\mathcal G$ and $\mathcal S_\phi$. In the instantaneous-effective-channel case, the effective channel is specified by $\mathcal G$ and $\phi$, whereas in the RCE-averaged case, it is specified only through $\mathcal G$ and $\mathcal S_\phi$. From an information-theoretic perspective, the corresponding capacities can be expressed as
\begin{align}
    C_k^{\mathrm{iec}}[n]
    &=
    \sup_{p(S)}
    I\!\left(S;Y\mid
    \boldsymbol{\phi},\mathcal{G},\mathcal S_{\phi}\right),
    \label{eq:capacity_iec_mi}
    \\
    C_k^{\mathrm{sRCE}}[n]
    &=
    \sup_{p(S)}
    I\!\left(S;Y\mid
    \mathcal{G},\mathcal S_{\phi}\right),
    \label{eq:capacity_srce_mi}
\end{align}
where $p(S)$ denotes the input distribution of the transmitted data symbol $S$.
Therefore, to prove
$C_k^{\mathrm{iec}}[n]\geq C_k^{\mathrm{sRCE}}[n]$, it is sufficient to show that,
for any admissible $p(S)$,
\begin{equation}
    I\!\left(S;Y\mid
    \boldsymbol{\phi},\mathcal{G},\mathcal S_{\phi}\right)
    \geq
    I\!\left(S;Y\mid
    \mathcal{G},\mathcal S_{\phi}\right).
    \label{eq:mi_ordering_target}
\end{equation}

Since the transmitted data symbol $S$ is independent of the instantaneous RCE realization
$\boldsymbol{\phi}$, we have
\begin{equation}
    I\!\left(S;\boldsymbol{\phi}\mid
    \mathcal{G},\mathcal S_{\phi}\right)=0.
    \label{eq:independent_symbol_rce}
\end{equation}
Using the chain rule of mutual information, we have
\begin{align}
    I\!\left(S;Y,\boldsymbol{\phi}\!\mid\!
    \mathcal{G},\mathcal S_{\phi}\right) 
    \!=\!\!
    I\!\left(S;\boldsymbol{\phi}\!\mid\!
    \mathcal{G},\mathcal S_{\phi}\right)
    \!\!+\!
    I\!\left(S;Y\!\!\mid\!
    \boldsymbol{\phi},\mathcal{G},\mathcal S_{\phi}\right).
    \label{eq:chain_rule_1}
\end{align}
From \eqref{eq:independent_symbol_rce}, \eqref{eq:chain_rule_1} becomes
\begin{equation}
    I\!\left(S;Y,\boldsymbol{\phi}\mid
    \mathcal{G},\mathcal S_{\phi}\right)
    =
    I\!\left(S;Y\mid
    \boldsymbol{\phi},\mathcal{G},\mathcal S_{\phi}\right).
    \label{eq:mi_with_instant_rce}
\end{equation}
On the other hand, applying the chain rule in another order gives
\begin{align}
    I\!\left(S;Y,\boldsymbol{\phi}\!\mid\!
    \mathcal{G},\mathcal S_{\phi}\right) 
    \!=\!
    I\!\left(S;Y\!\!\mid\!
    \mathcal{G},\mathcal S_{\phi}\right)
    \!+\!
    \!I\!\left(S;\boldsymbol{\phi}\!\mid\!\!
    Y,\mathcal{G},\mathcal S_{\phi}\right).
    \label{eq:chain_rule_2}
\end{align}
Since mutual information is nonnegative, we have
\begin{equation}
    I\!\left(S;\boldsymbol{\phi}\mid
    Y,\mathcal{G},\mathcal S_{\phi}\right)\geq 0.
\end{equation}
Therefore,
\begin{equation}
    I\!\left(S;Y,\boldsymbol{\phi}\mid
    \mathcal{G},\mathcal S_{\phi}\right)
    \geq
    I\!\left(S;Y\mid
    \mathcal{G},\mathcal S_{\phi}\right).
    \label{eq:side_information_nonnegative}
\end{equation}
Combining \eqref{eq:mi_with_instant_rce} and
\eqref{eq:side_information_nonnegative}, we obtain
\begin{equation}
    I\!\left(S;Y\mid
    \boldsymbol{\phi},\mathcal{G},\mathcal S_{\phi}\right)
    \geq
    I\!\left(S;Y\mid
    \mathcal{G},\mathcal S_{\phi}\right).
    \label{eq:mi_ordering}
\end{equation}
Since \eqref{eq:mi_ordering} holds for any admissible input distribution
$p(S)$, taking the supremum over $p(S)$ yields
$C_k^{\mathrm{iec}}[n]\geq C_k^{\mathrm{sRCE}}[n]$, which completes the proof.

\vspace{-5mm}
\section*{Appendix B: Proof of Theorem 1}
To obtain a tractable effective SINR for user $k$ at time instant $n$,
we next derive closed-form expressions for the numerator and denominator
of \eqref{eq:conditional_sinr_srce}. For compactness, we first define $z_{k,i}[n]\triangleq\sum_{m\in\mathcal M_i}e^{j\phi_m[n]} b_{m,k,i}$.

\textit{1) Compute $\mathsf{DS}_k[n]$:}
Since $b_{m,k,k}$ is deterministic, by the linearity of expectation we have
\begin{equation}
    \mathsf{DS}_k[n]=\mathbb E_{\boldsymbol{\phi}}\left[z_{k,k}[n]\right]
    =
    \sum_{m\in\mathcal{M}_k}
    \mathbb{E}_{\boldsymbol{\phi}}\!\left[e^{j\phi_m[n]}\right]
    b_{m,k,k}.
\end{equation}
Recall that the RCE $\phi_m[n]$ follows a Gaussian distribution, i.e., $\phi_m[n]\sim\mathcal{N}\left(0,\sigma_{\phi,m}^2[n]\right)$.
Then, by the characteristic function of a Gaussian random variable, it follows that
\begin{equation}
    \mathbb{E}_{\boldsymbol{\phi}}\!\left[e^{j\phi_m[n]}\right]
    =
    \exp\!\left(-\frac{\sigma_{\phi,m}^2[n]}{2}\right)\triangleq\alpha_m[n].
\end{equation}
Therefore, $\mathsf{DS}_k[n]$ can be written in closed form as
\begin{equation}\label{DS_close}
    \mathsf{DS}_k[n]
    =
    \sum_{m\in\mathcal{M}_k}
    \alpha_m[n]\, b_{m,k,k}.
\end{equation}

\textit{2) Compute $\mathbb{E}_{\boldsymbol{\phi}}[|\mathsf{BU}_k[n]|^2]$:}
Since
$\mathsf{BU}_k[n]=z_{k,k}[n]-\mathbb E_{\boldsymbol{\phi}}[z_{k,k}[n]]$,
we obtain $\mathbb E_{\boldsymbol{\phi}}[|\mathsf{BU}_k[n]|^2]=\mathbb E_{\boldsymbol{\phi}}[|z_{k,k}[n]|^2]-\left|\mathbb E_{\boldsymbol{\phi}}[z_{k,k}[n]]\right|^2$.
We first compute $\mathbb E_{\boldsymbol{\phi}}[|z_{k,k}[n]|^2]$:
\begin{align}\label{BU_1}
    &\mathbb E_{\boldsymbol{\phi}}[|z_{k,k}[n]|^2]\\ \nonumber
    &=
    \mathbb{E}_{\boldsymbol{\phi}}\!\left[
    \left(
    \sum_{m\in\mathcal{M}_k}
    e^{j\phi_m[n]} b_{m,k,k}
    \right)
    \left(
    \sum_{\ell\in\mathcal{M}_k}
    e^{-j\phi_\ell[n]} b_{\ell,k,k}^*
    \right)
    \right] \\ \nonumber
    &=
    \sum_{m\in\mathcal{M}_k}|b_{m,k,k}|^2
    +
    \sum_{\substack{m,\ell\in\mathcal{M}_k\\ m\neq \ell}}
    \mathbb{E}_{\boldsymbol{\phi}}[e^{j\phi_m[n]}e^{-j\phi_\ell[n]}]
    b_{m,k,k}b_{\ell,k,k}^*.
\end{align}
For analytical tractability, the RCEs across different APs are assumed to be mutually independent. Hence, for $m\neq \ell$,
\begin{align}
    \mathbb{E}_{\boldsymbol{\phi}}\!\left[
    e^{j\phi_m[n]}e^{-j\phi_\ell[n]}
    \right]
    &=
    \mathbb{E}_{\boldsymbol{\phi}}\!\left[e^{j\phi_m[n]}\right]
    \mathbb{E}_{\boldsymbol{\phi}}\!\left[e^{-j\phi_\ell[n]}\right]  \nonumber\\
    &=
    \alpha_m[n]\alpha_\ell^*[n].
\end{align}
Similarly, we have
\begin{align}\label{BU_2}
    &|\mathbb{E}_{\boldsymbol{\phi}}[z_k[n]]|^2\\ \nonumber
    &=\!
    \sum_{m\in\mathcal{M}_k}
    |\alpha_m[n]|^2 |b_{m,k,k}|^2
    \!+\!
    \sum_{\substack{m,\ell\in\mathcal{M}_k\\ m\neq \ell}}
    \alpha_m[n]\alpha_\ell^*[n]
    b_{m,k,k}b_{\ell,k,k}^*.
\end{align}

By subtracting \eqref{BU_2} from \eqref{BU_1}, the cross terms cancel out, yielding
\begin{equation}\label{BU_close}
    \mathbb{E}_{\boldsymbol{\phi}}[|\mathsf{BU}_k[n]|^2]
    =
    \sum_{m\in\mathcal{M}_k}
    \big(1-|\alpha_m[n]|^2\big)
    |b_{m,k,k}|^2.
\end{equation}

\textit{3) Compute $\mathbb{E}_{\boldsymbol{\phi}}[|\mathsf{UI}_{k,i}[n]|^2]$:}
Since $\mathsf{UI}_{k,i}[n]=z_{k,i}[n]$, it follows from \eqref{BU_1} that
\begin{align}\label{UI_close}
    &\mathbb{E}_{\boldsymbol{\phi}}[|\mathsf{UI}_{k,i}[n]|^2]\\ \nonumber
    &=
    \sum_{m\in\mathcal{M}_i} |b_{m,k,i}|^2
    +
    \sum_{\substack{m,\ell\in\mathcal{M}_i\\ m\neq \ell}}
    \alpha_m[n]\alpha_\ell^*[n]
    b_{m,k,i}b_{\ell,k,i}^* \\ \nonumber
    &=\left|
    \sum_{m\in\mathcal M_i}
    \alpha_m[n]b_{m,k,i}
    \right|^2
    +
    \sum_{m\in\mathcal M_i}
    \left(1-|\alpha_m[n]|^2\right)
    |b_{m,k,i}|^2.
\end{align}

Substituting \eqref{DS_close}, \eqref{BU_close}, and \eqref{UI_close} into the effective SINR in \eqref{eq:conditional_sinr_srce} yields \eqref{Gamma_closed_form}, shown at the top of the next page.
\begin{figure*}[!h]
\begin{align}
\label{Gamma_closed_form}
    \Gamma_k[n]
    =
    \frac{
    \left|
    \sum_{m\in\mathcal M_k}
    \alpha_m[n] b_{m,k,k}
    \right|^2
    }
    {
    \sum_{m\in\mathcal M_k}
    \left(1-|\alpha_m[n]|^2\right)
    |b_{m,k,k}|^2
    +
    \sum_{i\neq k}^{K}
    \left[
    \left|
    \sum_{m\in\mathcal M_i}
    \alpha_m[n] b_{m,k,i}
    \right|^2
    +
    \sum_{m\in\mathcal M_i}
    \left(1-|\alpha_m[n]|^2\right)
    |b_{m,k,i}|^2
    \right]
    +
    \sigma_k^2
    } .
\end{align}
\hrule
\vspace{-6mm}
\end{figure*}
Using the definitions in \eqref{eq:alpha_def}, \eqref{eq:stacked_w_def}, \eqref{eq:effective_channel_def}, and \eqref{eq:Ck_def}, we have:
\begin{equation}\label{def_1}
    \sum_{m\in\mathcal M_i}
    \alpha_m[n]b_{m,k,i}
    =
    \mathbf{g}_k^{\mathsf H}[n]\mathbf{w}_i,
\end{equation}
and
\begin{equation}\label{def_2}
    \sum_{m\in\mathcal M_i}
    \left(1-|\alpha_m[n]|^2\right)
    |b_{m,k,i}|^2
    =
    \mathbf w_i^{\mathsf H}\mathbf C_k[n]\mathbf w_i .
\end{equation}
Substituting \eqref{def_1} and \eqref{def_2} into
\eqref{Gamma_closed_form} yields \eqref{Gamma_compact}, which completes
the proof.

\bibliographystyle{IEEEtran}
\bibliography{mybib}

@inproceedings{shepard2012argos,
  title={Argos: Practical many-antenna base stations},
  author={Shepard, Clayton and Yu, Hang and Anand, Narendra and Li, Erran and Marzetta, Thomas and Yang, Richard and Zhong, Lin},
  booktitle="{Proc. 18th Annu. Int. Conf. Mobile Comput. Netw.}",
  pages={53--64},
  year={2012}
}

@article{rogalin2014scalable,
  title={Scalable synchronization and reciprocity calibration for distributed multiuser {MIMO}},
  author={Rogalin, Ryan and Bursalioglu, Ozgun Y and Papadopoulos, Haralabos and Caire, Giuseppe and Molisch, Andreas F and Michaloliakos, Antonios and Balan, Vlad and Psounis, Konstantinos},
  journal={IEEE Trans. Wireless Commun.},
  volume={13},
  number={4},
  pages={1815--1831},
  year={Apr. 2014},
  publisher={IEEE}
}

@article{vieira2017reciprocity,
  title={Reciprocity calibration for massive {MIMO}: Proposal, modeling, and validation},
  author={Vieira, Joao and Rusek, Fredrik and Edfors, Ove and Malkowsky, Steffen and Liu, Liang and Tufvesson, Fredrik},
  journal={IEEE Trans. Wireless Commun.},
  volume={16},
  number={5},
  pages={3042--3056},
  year={May 2017},
  publisher={IEEE}
}

@article{xu2023spanning,
  title={Spanning tree method for over-the-air channel calibration in 6{G} cell-free massive {MIMO}},
  author={Xu, Shu and Cao, Yang and Li, Chunguo and Wang, Dongming and Yang, Luxi},
  journal={IEEE Trans. Wireless Commun.},
  volume={22},
  number={8},
  pages={5567--5582},
  year={Aug. 2023},
  publisher={IEEE}
}

@misc{3gpp_tr38901,
  author  = {3rd Generation Partnership Project {(3GPP)}},
  title ="{Study on channel model for frequencies from 0.5 to 100 GHz (Release 14), TR 38.901, V19.2.0}",
  year = {2025},
  month = {Dec.}
}

@inproceedings{guillaud2005reciprocity,
  author    = {M. Guillaud and D. T. M. Slock and R. Knopp},
  title     = {A practical method for wireless channel reciprocity exploitation through relative calibration},
  booktitle = {Proc. Int. Symp. Signal Process. Appl. (ISSPA)},
  year      = {2005},
  pages     = {403--406}
}

@inproceedings{avalanche,
  author    = {H. Papadopoulos and O. Y. Bursalioglu and G. Caire},
  title     = {Avalanche: Fast {RF} calibration of massive arrays},
  booktitle = {Proc. IEEE Global Conf. Signal Inf. Process. (GlobalSIP)},
  year      = {2014},
  pages     = {607--611},
}

@ARTICLE{LS3,
  author={Jiang, Xiwen and Decurninge, Alexis and Gopala, Kalyana and Kaltenberger, Florian and Guillaud, Maxime and Slock, Dirk and Deneire, Luc},
  journal={IEEE Trans. Wireless Commun.}, 
  title={A Framework for Over-the-Air Reciprocity Calibration for TDD Massive MIMO Systems}, 
  year={2018},
  month={Sep. },
  volume={17},
  number={9},
  pages={5975-5990},}

@ARTICLE{111,
  author={Mudumbai, Raghuraman and Brown Iii, D. Richard and Madhow, Upamanyu and Poor, H. Vincent},
  journal={IEEE Commun. Mag.}, 
  title={Distributed transmit beamforming: challenges and recent progress}, 
  year={2009},
  month={Feb. },
  volume={47},
  number={2},
  pages={102-110},
 }

@ARTICLE{Hien,
  author={Ngo, Hien Quoc and Ashikhmin, Alexei and Yang, Hong and Larsson, Erik G. and Marzetta, Thomas L.},
  journal={IEEE Trans. Wireless Commun.}, 
  title={Cell-Free Massive {MIMO} Versus Small Cells}, 
  year={2017},
  month={Mar.},
  volume={16},
  number={3},
  pages={1834-1850},
 }

@ARTICLE{MMichail,
  author={Matthaiou, Michail and Yurduseven, Okan and Ngo, Hien Quoc and Morales-Jimenez, David and Cotton, Simon L. and Fusco, Vincent F.},
  journal={IEEE Commun. Mag.}, 
  title={The Road to 6{G}: Ten Physical Layer Challenges for Communications Engineers}, 
  year={2021},
  month={Jan.},
  volume={59},
  number={1},
  pages={64-69},
  }

@ARTICLE{Correctly,
  author={Nissel, Ronald},
  journal={IEEE Commun. Lett.}, 
  title={Correctly Modeling {TX} and {RX} Chain in (Distributed) Massive MIMO—New Fundamental Insights on Coherency}, 
  year={2022},
  month={Oct.},
  volume={26},
  number={10},
  pages={2465-2469},
 }

@ARTICLE{Erik,
  author={Larsson, Erik G.},
  journal={IEEE Trans. Signal Process.}, 
  title={Massive Synchrony in Distributed Antenna Systems}, 
  year={2024},
  month={Jan.},
  volume={72},
  number={},
  pages={855-866},
  }

@ARTICLE{Sync4CT,
  author={Wang, Xi and Xu, Fan and Shi, Qingjiang},
  journal={IEEE J. Sel. Areas Commun.}, 
  title={Sync4CT: Synchronization for Coherent Transmission in Distributed Massive {MIMO}}, 
  year={2026},
  month={Jan.},
  volume={44},
  number={},
  pages={2854-2871},
 }

@ARTICLE{BeamSync,
  author={Kunnath Ganesan, Unnikrishnan and Sarvendranath, Rimalapudi and Larsson, Erik G.},
  journal={IEEE Trans. Wireless Commun.},
  title={BeamSync: Over-the-Air Synchronization for Distributed Massive {MIMO} Systems}, 
  year={2024},
  month={Jul.},
  volume={23},
  number={7},
  pages={6824-6837}}

@ARTICLE{Access1,
  author={Vaghefi, Reza Monir and Palat, Ramesh Chembil and Marzin, Giovanni and Basavaraju, Kiran and Feng, Yiping and Banu, Mihai},
  journal={IEEE Access}, 
  title={Achieving Phase Coherency and Gain Stability in Active Antenna Arrays for Sub-6 {GHz} {FDD} and {TDD} {FD-MIMO}: Challenges and Solutions}, 
  year={2020},
  volume={8},
  number={},
  pages={152680-152696}}

@ARTICLE{CF_fronthaul,
  author={Marsch, Patrick and Fettweis, Gerhard},
  journal={IEEE Trans. Wireless Commun.}, 
  title={Uplink {CoMP} under a Constrained Backhaul and Imperfect Channel Knowledge}, 
  year={2011},
  month={Jun.},
  volume={10},
  number={6},
  pages={1730-1742}
  }

@ARTICLE{phase_noise1,
  author={Bj{\"o}rnson, Emil and Matthaiou, Michail and Debbah, Mérouane},
  journal={IEEE Trans. Wireless Commun.}, 
  title={Massive {MIMO} with Non-Ideal Arbitrary Arrays: Hardware Scaling Laws and Circuit-Aware Design}, 
  year={2015},
  month={Aug.},
  volume={14},
  number={8},
  pages={4353-4368}}

@ARTICLE{phase_noise2,
  author={Pitarokoilis, Antonios and Mohammed, Saif Khan and Larsson, Erik G.},
  journal={IEEE Trans. Wireless Commun.}, 
  title={Uplink Performance of Time-Reversal MRC in Massive MIMO Systems Subject to Phase Noise}, 
  year={2015},
  month={Feb.},
  volume={14},
  number={2},
  pages={711-723},
}

@ARTICLE{intro1,
  author={Raeesi, Orod and Gokceoglu, Ahmet and Zou, Yaning and Bj{\"o}rnson, Emil and Valkama, Mikko},
  journal={IEEE Trans. Commun.}, 
  title={Performance Analysis of Multi-User Massive {MIMO} Downlink Under Channel Non-Reciprocity and Imperfect {CSI}}, 
  year={2018},
  month={Jun.},
  volume={66},
  number={6},
  pages={2456-2471}}

@ARTICLE{wmmse,
  author={Shi, Qingjiang and Razaviyayn, Meisam and Luo, Zhi-Quan and He, Chen},
  journal={IEEE Trans. Signal Process.}, 
  title={An Iteratively Weighted {MMSE} Approach to Distributed Sum-Utility Maximization for a {MIMO} Interfering Broadcast Channel}, 
  year={2011},
  month={Sep.},
  volume={59},
  number={9},
  pages={4331-4340},
}

@ARTICLE{rwmmse,
  author={Zhao, Xiaotong and Lu, Siyuan and Shi, Qingjiang and Luo, Zhi-Quan},
  journal={IEEE Trans. Signal Process.}, 
  title={Rethinking {WMMSE}: Can Its Complexity Scale Linearly With the Number of {BS} Antennas?}, 
  year={2023},
  month={Feb.},
  volume={71},
  number={},
  pages={433-446},
}

@ARTICLE{power_CF,
  author={Chakraborty, Sucharita and Demir, {\"O}zlem Tu\u{g}fe and Bj{\"o}rnson, Emil and Giselsson, Pontus},
  journal={IEEE Open J. Commun. Soc.}, 
  title={Efficient Downlink Power Allocation Algorithms for Cell-Free Massive {MIMO} Systems}, 
  year={2021},
  volume={2},
  number={},
  pages={168-186},
 }

@ARTICLE{ICSI,
  author={Interdonato, Giovanni and Ngo, Hien Quoc and Frenger, Pål and Larsson, Erik G.},
  journal={IEEE Trans. Wireless Commun.}, 
  title={Downlink Training in Cell-Free Massive {MIMO}: A Blessing in Disguise}, 
  year={Nov. 2019},
  volume={18},
  number={11},
  pages={5153-5169},
  }

@ARTICLE{RobustCF1,
  author={Han, Shengqian and Yang, Chenyang and Wang, Gang and Zhu, Dalin and Lei, Ming},
  journal={IEEE Trans. Commun.}, 
  title={Coordinated Multi-Point Transmission Strategies for {TDD} Systems with Non-Ideal Channel Reciprocity}, 
  year={Oct. 2013},
  volume={61},
  number={10},
  pages={4256-4270},
 }

@article{CF_book,
  author    = {{\"O}zlem Tu{\u{g}}fe Demir and Emil Bj{\"o}rnson and Luca Sanguinetti},
  title     = {Foundations of User-Centric Cell-Free Massive {MIMO}},
  journal   = {Found. Trends Signal Process.},
  volume    = {14},
  number    = {3--4},
  pages     = {162--472},
  year      = {Jan. 2021},
}

@ARTICLE{JSAC_LO,
  author={Zheng, Jiakang and Zhang, Jiayi and Cheng, Julian and Leung, Victor C. M. and Ng, Derrick Wing Kwan and Ai, Bo},
  journal={IEEE J. Sel. Areas Commun.}, 
  title={Asynchronous Cell-Free Massive {MIMO} With Rate-Splitting}, 
  year={May 2023},
  volume={41},
  number={5},
  pages={1366-1382}
  }

@ARTICLE{Emil_making_CF,
  author={Bj{\"o}rnson, Emil and Sanguinetti, Luca},
  journal={IEEE Trans. Wireless Commun.}, 
  title={Making Cell-Free Massive {MIMO} Competitive With {MMSE} Processing and Centralized Implementation}, 
  year={Jan. 2020},
  volume={19},
  number={1},
  pages={77-90}
}

@ARTICLE{COTS,
  author={Cao, Yang and Wang, Pan and Zheng, Kang and Liang, Xianghu and Liu, Dongjie and Lou, Mengting and Jin, Jing and Wang, Qixing and Wang, Dongming and Huang, Yongming and You, Xiaohu and Wang, Jiangzhou},
  journal={IEEE J. Sel. Areas Commun.}, 
  title={Experimental Performance Evaluation of Cell-Free Massive {MIMO} Systems Using {COTS} {RRU} With {OTA} Reciprocity Calibration and Phase Synchronization}, 
  year={Jun. 2023},
  volume={41},
  number={6},
  pages={1620-1634},
}

@ARTICLE{Ngo_nopilot,
  author={Ngo, Hien Quoc and Larsson, Erik G.},
  journal={IEEE Trans. Wireless Commun.}, 
  title={No Downlink Pilots Are Needed in {TDD} Massive {MIMO}}, 
  year={May 2017},
  volume={16},
  number={5},
  pages={2921-2935},
  }

@ARTICLE{CF_Non_1,
  author={Morte Palacios, Jorge and Raeesi, Orod and Gokceoglu, Ahmet and Valkama, Mikko},
  journal={IEEE Wireless Commun. Lett.}, 
  title={Impact of Channel Non-Reciprocity in Cell-Free Massive {MIMO}}, 
  year={Mar. 2020},
  volume={9},
  number={3},
  pages={344-348},
  }

@article{Ohashi2021CFmMIMO_NOMA,
  author  = {Ohashi, Aline A. and Costa, Daniel Benevides da and Fernandes, Andr{\'e} L. P. and Monteiro, Waldeir and Failache, Ruan and Cavalcante, Andr{\'e} Mendes and Costa, Jo{\~a}o C. W. A.},
  title   = {Cell-Free Massive {MIMO}-{NOMA} Systems With Imperfect {SIC} and Non-Reciprocal Channels},
  journal = {IEEE Wireless Commun. Lett.},
  year    = {Jun. 2021},
  volume  = {10},
  number  = {6},
  pages   = {1329--1333},
}

@ARTICLE{URLLC,
  author={Fang, Hao and Hu, Han and Zhang, Yao and Qiao, Xu and Yang, Longxiang and Zhu, Hongbo},
  journal={IEEE Wireless Commun. Lett.}, 
  title={Cell-Free Massive {MIMO} for {URLLC} With Imperfect Channel Reciprocity}, 
  year={Jul. 2024},
  volume={13},
  number={7},
  pages={1888-1892}
 }

@ARTICLE{AirSync,
  author={Balan, Horia Vlad and Rogalin, Ryan and Michaloliakos, Antonios and Psounis, Konstantinos and Caire, Giuseppe},
  journal={IEEE/ACM Trans. Netw.},
  title={AirSync: Enabling Distributed Multiuser {MIMO} With Full Spatial Multiplexing},
  year={2013},
  month={Dec.},
  volume={21},
  number={6},
  pages={1681--1695}
}

@ARTICLE{GNN_RC,
  author={Sun, Mingjun and Mu, Xidong and Wu, Shaochuan and Ouyang, Chongjun and Liu, Yuanwei},
  journal={IEEE Wireless Commun. Lett.}, 
  title={{GNN}-aided Over-the-Air Reciprocity Calibration for Cell-Free Massive {MIMO} Systems}, 
  year={May 2026, early access},
  volume={},
  number={},
  pages={1-1},
 }

@inproceedings{hardware_1,
  author    = {Bourdoux, Andr{\'e} and Come, Boris and Khaled, Nadia},
  title     = {Non-reciprocal transceivers in {OFDM/SDMA} systems: Impact and mitigation},
  booktitle = {Proc. IEEE Radio Wireless Conf. (RAWCON)},
  address   = {Boston, MA, USA},
  year      = {Aug. 2003},
  pages     = {183--186}
}

@inproceedings{hardware_2,
  author    = {Jungnickel, V. and Kr{\"u}ger, V. and Istoc, G. and Haustein, T. and von Helmolt, C.},
  title     = {A {MIMO} system with reciprocal transceivers for the time-division duplex mode},
  booktitle = {Proc. IEEE Antennas Propag. Soc. Int. Symp.},
  year      = {Jun. 2004},
  volume    = {2},
  pages     = {1267--1270}
}

@inproceedings{AirShare,
  author    = {Abari, Omid and Rahul, Hariharan and Katabi, Dina and Pant, Mondira},
  title     = {{AirShare}: Distributed coherent transmission made seamless},
  booktitle = {Proc. IEEE Conf. Comput. Commun. (INFOCOM)},
  year      = {Apr. 2015},
  pages     = {1742--1750},
}

@ARTICLE{RobustCF2,
  author={Minasian, Arin and Shahbazpanahi, Shahram and Adve, Raviraj S.},
  journal={IEEE Tran. Commun.}, 
  title={Distributed Massive {MIMO} Systems With Non-Reciprocal Channels: Impacts and Robust Beamforming}, 
  year={Nov. 2018},
  volume={66},
  number={11},
  pages={5261-5277},
  }
\end{document}